%
\documentclass{lmcs}
\pdfoutput=1

\usepackage{lastpage}
\lmcsdoi{21}{2}{15}
\lmcsheading{}{\pageref{LastPage}}{}{}%
{Dec.~09,~2022}{May~20,~2025}{}

\keywords{model-checking, structural reductions, stutter-sensitivity, omega-automata}

%

\usepackage[utf8]{inputenc}
\usepackage[T1]{fontenc}
\usepackage{acronym}
\usepackage{dsfont}
\usepackage{color}
\usepackage{epsfig}
\usepackage{graphics}
\usepackage{graphicx}
\usepackage{epstopdf}
\usepackage{url}
\usepackage{pslatex}
\usepackage[english,ruled]{algorithm2e}
\usepackage{newtxtext,newtxmath}

\usepackage{tikz}
\usetikzlibrary{positioning,petri}

\usetikzlibrary{calc,arrows,shapes,automata,petri,positioning,decorations.markings}
\tikzset{
    markedge/.style={
    decoration={ markings,
      mark=at position .5 with {\draw[-,thick,solid,color=red] (-2mm,2mm) -- (2mm,-2mm);\draw[-,thick,solid,color=red] (2mm,2mm) -- (-2mm,-2mm);}  
    },
    postaction={decorate}
  },
}

\usepackage[mode=buildnew]{standalone}
\usepackage{tikz}
\usetikzlibrary{positioning}
\usetikzlibrary{automata, positioning, arrows}
\usepackage{varwidth}
\usetikzlibrary{shapes.geometric}
\usetikzlibrary{arrows.meta}
\usetikzlibrary{arrows}
\usetikzlibrary{arrows,decorations.markings}
\usetikzlibrary{shapes.misc, positioning}

\usepackage{listings}
\lstdefinelanguage{its}
{morekeywords={if, =, +, >=, +, -, (, ), [, ], true, false, typedef, transition, int,GAL, abort, !, \{, \}, label, ", &&, ., composite, synchronization,for,gal, array},
morecomment=[l]{//},
sensitive=false,
}
\lstset{language=its,basicstyle=\scriptsize, 
keywordstyle=\color{black}\bfseries,numbers=left, numberstyle=\tiny, tab=2}

\synctex=1









\newcommand{\nat}%
{\ensuremath{\mathds{N}}}
\newcommand{\zrel}%
{\ensuremath{\mathds{Z}}}
\newcommand{\rea}%
{\ensuremath{\mathds{R}}}
\newcommand{\bool}%
{\ensuremath{\mathds{B}}}

\newcommand{\pre}%
{\ensuremath{\preccurlyeq}}

\usepackage{accents}
\newcommand{\ubar}[1]{\underaccent{\bar}{#1}}

\newcommand{\tuple}[1]{\langle #1 \rangle}












\newcommand{\Places}{\ensuremath{\mathcal{P}}}
\newcommand{\Trans}{\ensuremath{\mathcal{T}}}
\newcommand{\Pre}{\ensuremath{\mathcal{W}_-}}
\newcommand{\Post}{\ensuremath{\mathcal{W}_+}}

\newcommand{\Preset}[1]{\ensuremath{\bullet{#1}}}
\newcommand{\Postset}[1]{\ensuremath{{#1}\bullet}}


\newcommand{\Stutter}{\ensuremath{\mathit{Inv}_\AP}}

\newcounter{rrule}

\renewcommand{\implies}{\Rightarrow}

\renewcommand{\iff}{\Leftrightarrow}


\newcommand{\lang}{\ensuremath{\mathscr{L}}}

\newcommand{\AP}{\ensuremath{\mathrm{AP}}}

\newcommand{\KS}{\ensuremath{\mathit{KS}}}

\newcommand{\tr}{\xrightarrow}

\usepackage{booktabs}
\usepackage{subcaption}

\begin{document}

\title{Structural Reductions and Stutter Sensitive Properties}
\titlecomment{This paper is an extended version of~\cite{forte22}.}



\author[E.~Paviot-Adet]{Emmanuel Paviot-Adet}[a,b]
\author[D.~Poitrenaud]{Denis Poitrenaud\lmcsorcid{0000-0001-9013-4413}}[a,b]
\author[E.~Renault]{Etienne Renault\lmcsorcid{0000-0001-9013-4413}}[c]
\author[Y.~Thierry-Mieg]{Yann Thierry-Mieg\lmcsorcid{0000-0001-7775-1978}}[a]

\address{Sorbonne Université, CNRS, LIP6, Paris, France}	
\email{Emmanuel.Paviot-Adet@lip6.fr, Denis.Poitrenaud@lip6.fr, Yann.Thierry-Mieg@lip6.fr}  

\address{Université Paris Cité, Paris, France}	

\address{EPITA, LRE, Kremlin-Bic\^etre, France}
\email{etienne.renault@epita.fr}

\begin{abstract}

Verification of properties  expressed as $\omega$-regular languages such as LTL can benefit hugely from stutter insensitivity, using a diverse set of reduction strategies. However properties that are not stutter invariant, for instance due to the use of the neXt operator of LTL or to some form of counting in the logic, are not covered by these techniques in general.

We propose in this paper to study a weaker property than stutter insensitivity.
In a stutter insensitive language both adding and removing stutter to
a word does not change its acceptance, any stuttering can be abstracted away; by decomposing this equivalence relation into two implications we obtain weaker conditions.
We define a shortening insensitive language where any word that stutters less than a word in the language must also belong to the language. A lengthening insensitive language has the dual property.
A semi-decision procedure is then introduced to reliably prove shortening insensitive properties or deny lengthening insensitive properties while working with a \emph{reduction} of a system. A reduction has the property that it can only shorten runs. 
Lipton's transaction reductions or Petri net agglomerations are examples of eligible structural reduction strategies.

We also propose to use a partition of a property language into its stutter insensitive, shortening insensitive, lengthening insensitive and length sensitive parts; this lets us apply at least some structural reductions even when working with arbitrary properties.

An implementation and experimental evidence are provided showing that most non-random properties sensitive to stutter are actually shortening or lengthening insensitive. 
Performance of experiments on a large (random) benchmark from the model-checking competition indicates that, despite being a semi-decision procedure, the approach can still improve state of the art verification tools.
\end{abstract}

\maketitle              

\section{Introduction}


Model checking is an automatic verification technique for proving the correctness of systems that have finite state abstractions. 
Properties can for instance be expressed using the popular Linear-time Temporal Logic (LTL)~\cite{Vardi07}.
To verify LTL properties, the automata-theoretic approach~\cite{Vardi07} builds a product between a Büchi automaton representing the negation of the LTL formula and the reachable state graph of the system (seen as a set of infinite runs).
This approach has been used successfully to verify both hardware and software components\cite{KazhamiakinPR04,dureja.18.tacas}, but it suffers from the so called ``state explosion problem'': as the number of state variables in the system increases, the size of the system state space grows exponentially. 

One way to tackle this issue is to consider \textit{structural reductions}. Structural reductions take their roots  in the work of Lipton~\cite{Lipton75} (with extensions by Lamport~\cite{Lamport1989pretending,Lamport90theorem}) and  Berthelot~\cite{Berthelot85}. Nowadays, these reductions are still considered as an attractive way to alleviate the state explosion problem~\cite{Laarman18,berthomieu19,YTM20}. 
Structural reductions strive to fuse structurally ``adjacent'' events into a single atomic step, leading to fewer interleavings of independent events and fewer observable behaviors
in the resulting system. An example of such a structural reduction is shown on Figure~\ref{fig:example} where actions are progressively grouped (see Section~\ref{ss:kripke} for a more detailed presentation). It can be observed that the Kripke structure representing the state space of the program is significantly simplified.

  \begin{figure}[t]
  \centering
\includestandalone[width=\linewidth]{fig/fig}
    \caption{
    Example of reductions. (1) is a program with two threads and 3 variables. \textit{chan} is a communication channel where \textit{send(int)} insert a message and \textit{int recv()} waits until a message is available and then consumes it. We assume that one can only observe whether $x$ or $y$ is zero denoted $p$ and $q$. (2) depicts the state-space represented as a Kripke structure. Each node is labelled by the values of atomic propositions $p$ and $q$. 
    When an instruction is executed the values of these propositions \emph{may} evolve. (3) represents the state-space of a structurally reduced version of the program where actions of thread $\beta$ "z=40;chan.send(z)" are fused into a single atomic operation. (4) represents the state-space of a program where the three actions of the original program "z=40;chan.send(z);y=chan.recv()" are now a single atomic step.
    }
    \label{fig:example}
\end{figure}

%
%

Traditionally structural reductions construct a smaller system that
preserves properties such as deadlock freedom, liveness, reachability~\cite{HPP06}, and \emph{stutter insensitive} temporal logic~\cite{PPP00} such as the fragment of LTL without the next operator LTL$_{\setminus X}$. 
The verification of a \emph{stutter insensitive} property on a given system
does not depend on whether non observable events (i.e. that do not update atomic propositions)
are abstracted or not.
On Fig~\ref{fig:example} both instructions "$z=40;$" and "$chan.send(z)$" of thread $\beta$ are non observable.

This paper shows that structural reductions can in fact be used even for fragments of LTL that are \emph{not} stutter insensitive. 
We identify two fragments that we call 
\emph{shortening insensitive} (if a word is in the language, then so is any version that stutters less) or \emph{lengthening insensitive} (if a word is in the language, then so is any version that stutters more).
Based on this classification we introduce two semi-decision procedures that provide a reliable verdict only in one direction: e.g. presence of counter examples is reliable for lengthening insensitive properties, but absence is not.
We also propose to decompose properties to partition their language into stutter insensitive, shortening insensitive and lengthening insensitive parts, allowing one to use structural reductions even if the property is not globally insensitive to length.

The paper is structured as follows, Section~\ref{sec:defs} presents the definitions and notations relevant to our setting in an abstract manner, focusing on the level of description of a language. Section~\ref{sec:red} instantiates these definitions in the more concrete setting of LTL verification. Section~\ref{sec:perfs} provides experimental evidence supporting the claim that the method we propose is both 
applicable to many formulae and can significantly improve state of the art model-checkers. 
Section~\ref{sec:extension} introduces several extensions to the approach, providing new material with respect to~\cite{forte22}.
Some related work is presented in Section~\ref{sec:related} before concluding.

\begin{figure}[t]
    \centering
\includestandalone[width=0.7\linewidth]{fig/figure}
    \caption{
    $\Sigma^\omega$ is represented as a circle that is partitioned into equivalence classes of words ($\hat{r_0}, \hat{r_1} \ldots$). Each point in the space is a word, and some of the $\pre$ relations are represented as arrows ; the red point is the shortest word $\underline{\hat{r}}$ in the equivalence class. Gray areas are inside the language, white are outside of it. Four languages are depicted: 
   $\lang_a$: equivalence classes are entirely inside or outside \textbf{a stutter insensitive language}, 
    $\lang_b$: the "bottom" of an equivalence class may belong to \textbf{a shortening insensitive language},  
    $\lang_c$: the "top" of an equivalence class may belong to \textbf{a lengthening insensitive language},  
    $\lang_d$: some languages are neither lengthening insensitive nor shortening insensitive.
    }
    \label{fig:lang}
\end{figure}

\section{Definitions}
\label{sec:defs}

In this section we first introduce in Section~\ref{ss:shorterthan} a \emph{``shorter than''} partial order relation on infinite words, based on the number of repetitions or stutter in the word. 
This partial order gives us in Section~\ref{ss:shortlong} the notions of shortening and lengthening insensitive language, which are shown to be weaker versions of classical stutter insensitivity in Section~\ref{ss:relationSI}.
We then define in Section~\ref{ss:checkred} the \emph{reduction of a language} which contains a shorter representative of each word in the original language. 
Finally we show that we can use a semi-decision procedure to verify shortening or lengthening insensitive properties using a reduction of a system.

\subsection{A ``Shorter than'' relation for infinite words}
\label{ss:shorterthan}

\begin{defi}
    [Word] A word over a finite alphabet $\Sigma$ is
    an infinite sequence of symbols in $\Sigma$.
    We canonically denote a word $r$ using one of the two forms:  
    \begin{itemize}
        \item (plain word) $r=a_0^{n_0}a_1^{n_1}a_2^{n_2}\ldots$ with for all $i \in \nat$, $a_i \in \Sigma$, $n_i \in \nat^+$ and $a_i \neq a_{i+1}$, or
        \item ($\omega$-word) $r=a_0^{n_0}a_1^{n_1}\ldots a_k^\omega$ with $k \in \nat$ and for all $0 \leq i \leq k$, $a_i \in \Sigma$, and for $i<k$, $n_i \in \nat^+$ and $a_i \neq a_{i+1}$. $a_k^\omega$ represents an infinite stutter on the final symbol $a_k$ of the word.
    \end{itemize}
    The set of all words over alphabet $\Sigma$ is denoted $\Sigma^\omega$.
\end{defi}

These notations using a power notation for repetitions of a symbol in a word are introduced to highlight stuttering. We force the symbols to alternate to ensure we have a canonical representation: with $\sigma$ a suffix (not starting by symbol $b$), the word $aab\sigma$ must be represented as $a^2b^1\sigma$ and not $a^1a^1b^1\sigma$. 
To represent a word of the form $aabbcccccc\ldots$ we use an $\omega$-word: $a^2b^2c^\omega$.

\begin{defi}
    [Shorter than] A plain word  $r=a_0^{n_0}a_1^{n_1}a_2^{n_2}\ldots$ is \emph{shorter} than a plain word $r'=a_0^{n_0'}a_1^{n_1'}a_2^{n_2'}\ldots$ if and only if for all $i \in \nat$, $0 < n_i \leq n_i'$. 
    For two $\omega$-words $r=a_0^{n_0}\ldots a_k^\omega$ and $r'=a_0^{n_0'}\ldots a_k^\omega$, $r$ is \emph{shorter} than $r'$ if and only if for all $i < k$, $n_i \leq n_i'$.
    
    We denote this relation on words as $r \pre r'$. 
\end{defi}

For instance, for any given suffix $\sigma$, $a b \sigma \pre a^2 b \sigma$. Note that $a b \sigma \pre a b^2 \sigma$ as well, but that $a^2 b \sigma$ and $a b^2 \sigma$ are incomparable. $\omega$-words are incomparable with plain words. 

\begin{pty}
    The $\pre$ relation is a partial order on words. 
\end{pty}

\begin{proof}
The relation is clearly reflexive ($\forall r \in \Sigma^\omega, r \pre r$), anti-symmetric ($\forall r, r' \in \Sigma^\omega$, $r \pre r' \land r' \pre r \implies r = r'$) and transitive ($\forall r, r',r'' \in \Sigma^\omega, r \pre r' \land r' \pre r'' \implies r \pre r''$). The order is partial since some words (such as $a^2 b \sigma$ and $a b^2 \sigma$ presented above) are incomparable.
\end{proof}

\begin{defi}[Stutter equivalence]
\label{def:stutterequiv}
    A word  $r$ is \emph{stutter equivalent} to $r'$, denoted as $r \sim r'$ if and only if there exists a shorter word $r''$ such that $r'' \pre r \land r'' \pre r'$. This relation $\sim$ is an equivalence relation thus partitioning words of $\Sigma^\omega$ into equivalence classes.
    
    We denote $\hat{r}$ the equivalence class of a word $r$ and denote $\ubar{\hat{r}}$ the shortest word in that equivalence class.
\end{defi}

For any given word $r=a_0^{n_0}a_1^{n_1}a_2^{n_2}\ldots$ there is a shortest representative in $\hat{r}$ that is the word $\ubar{\hat{r}}=a_0 a_1 a_2\ldots$ where no symbol is ever consecutively repeated more than once (until the $\omega$ for an $\omega$-word). By definition all words that are comparable to $\ubar{\hat{r}}$ are stutter equivalent to each other, since $\ubar{\hat{r}}$ can play the role of $r''$ in the definition of stutter equivalence, giving us an equivalence relation: it is reflexive, symmetric and transitive.

For instance, with $\sigma$ denoting a suffix, $\ubar{\hat{r}}=a b \sigma$ would be the shortest representative of any word of $\hat{r}$ of the form $a^{n_0}b^{n_1}\sigma$. We can see by this definition that, despite being incomparable, $a^2 b \sigma \sim a b^2 \sigma$ since $a b \sigma \pre a^2 b \sigma$ and $a b \sigma \pre a b^2 \sigma$.

\subsection{Sensitivity of a language to the length of words}
\label{ss:shortlong}

\begin{defi}
    [Language] A language $\lang$ over a finite alphabet $\Sigma$ is a set of words over $\Sigma$, hence $\lang \subseteq \Sigma^\omega$. We denote by $\bar{\lang} =  \Sigma^\omega \setminus \lang$ the complement of a language $\lang$.
\end{defi}

In the literature, most studies that exploit a form of stuttering are focused on \emph{stutter insensitive} languages \cite{porBook,Valmari90,Peled94,GodefroidW94,HPP06}.
 In a stutter insensitive language $\lang$, duplicating any letter (also called stuttering) or removing any duplicate letter from a word of $\lang$ must produce another word of $\lang$.
In other words, all stutter equivalent words in a class $\hat{r}$ must be either in the language or outside it. Let us introduce weaker variants of this property, which were originally presented in~\cite{forte22}.

\begin{defi}[Shortening insensitive]
\label{def:shortins}
    A language $\lang$ is \emph{shortening insensitive} if and only if for any word $r$ it contains, all shorter words $r' \in \Sigma^\omega$ such that $r' \pre r$ also belong to $\lang$.
    $$\forall r \in \lang, \forall r' \in \Sigma^\omega, r' \pre r \implies r' \in \lang$$ 
\end{defi}

For instance, a shortening insensitive language $\lang$ that contains the word $a^3 b \sigma$ must also contain shorter words $a^2 b \sigma$, and $a b \sigma$. If it contains $a^2 b^2 \sigma$ it also contains $a^2 b \sigma$, $a b^2 \sigma$ and $a b \sigma$.

\begin{defi}[Lengthening insensitive]
\label{def:longins} A language $\lang$ is \emph{lengthening insensitive} if and only if for any word $r$ it contains, all longer words $r' \in \Sigma^\omega$ such that $r \pre r'$ also belong to $\lang$. 
    $$\forall r \in \lang, \forall r' \in \Sigma^\omega, r \pre r' \implies r' \in \lang$$
\end{defi}

For instance, a lengthening insensitive language $\lang$ that contains the word $a^2 b \sigma$ must also contain all longer words $a^3 b \sigma$, $a^2 b^2 \sigma$ \ldots, and more generally words of the form $a^{n} b^{n'} \sigma$ with $n \geq 2$ and $n' \geq 1$. If it contains $\ubar{\hat{r}}=a b \sigma$ the shortest representative of an equivalence class, it contains all words in the stutter equivalence class.

While stutter insensitive languages have been heavily studied, there is, to our knowledge, no study on what reductions are possible if only one direction holds, i.e. the language is shortening or lengthening insensitive, but not both. A shortening insensitive language is essentially asking for something to happen \emph{before} a certain deadline or stuttering ``too much''. A lengthening insensitive language is asking for something to happen \emph{at the earliest} at a certain date or after having stuttered at least a certain number of times. Figure~\ref{fig:lang} represents these situations graphically.


\subsection{Relationship to stutter insensitive logic}
\label{ss:relationSI}

A language is both shortening and lengthening insensitive if and only if it is stutter insensitive (see Fig.~\ref{fig:lang}).
This fact is already used in~\cite{ADL15} to identify a stutter insensitive language using only its automaton.
Furthermore since stutter equivalent classes of words are entirely inside or outside a stutter insensitive language, 
a language $\lang$ is stutter insensitive if and only if the complement language $\bar{\lang}$ is stutter insensitive.

However, if we look at sensitivity to length and how it interacts with the complement operation, we find a dual relationship where the complement of a shortening insensitive language is lengthening insensitive and vice versa. 

\begin{pty}
\label{prop:shortduallong}
A language $\lang$ is shortening insensitive if and only if the complement language $\bar{\lang}$ is lengthening insensitive.
\end{pty}

\begin{proof}
Let $\lang$ be shortening insensitive. Let $r \in \bar{\lang}$ be a word in the complement of $\lang$. Any word $r'$ such that $r \pre r'$ must also belong to $\bar{\lang}$, since if it belonged to the shortening insensitive $\lang$, $r$ would also belong to $\lang$. Hence $\bar{\lang}$ is lengthening insensitive. The converse implication can be proved using the same reasoning.
\end{proof}

If we look at Figure~\ref{fig:lang}, the dual effect of complement on the sensitivity of the language to length is  apparent: if gray and white are switched we can see that $\bar{\lang_b}$ is lengthening insensitive and $\bar{\lang_c}$ shortening insensitive.

\subsection{When is visiting shorter words enough?}
\label{ss:checkred}

\begin{defi}[Reduction]
\label{def:reduction}
    Let $I$ be a reduction function $\Sigma^\omega \rightarrow \Sigma^\omega$ such that $I(r) \pre r$ for every $r$ in $\Sigma^\omega$. The reduction by $I$ of a language $\lang$ is $\mathit{Red}_I(\lang)=\{I(r) \mid r \in \lang\}$. 
\end{defi}


Note that the $\pre$ partial order is not strict so that the image of a word may be the word itself, hence the identity function is a reduction function. In most cases however we expect the reduction function to map many words $r$ of the original language to a single shorter word $r'$ of the reduced language. 
Note that given any two reduction functions $I$ and $I'$, $I \circ I'$ is also a reduction function, therefore $\mathit{Red}_I(\mathit{Red}_{I'}(\lang)) = \mathit{Red}_{I\circ I'}(\lang)$ is still a reduction of $\lang$. Hence chaining reduction rules still produces a reduction.
As we will discuss in Section~\ref{ss:kripke} structural reductions of a specification such as Lipton's transaction reduction~\cite{Lipton75,Laarman18} or Petri net agglomerations~\cite{Berthelot85,YTM20} (see also Section~\ref{sec:petri}) induce a reduction at the language level.
In Fig.~\ref{fig:example} fusing statements into a single atomic step in the program induces a reduction of the language.

\begin{thm}
[Reduced Emptiness Checks]
\label{th:short}
Let $I$ be a reduction function. Given two languages $\lang$ and $\lang'$, 
\begin{itemize}
    \item if $\lang$ is shortening insensitive, then $\lang \cap \mathit{Red}_I(\lang') = \emptyset \implies \lang \cap \lang' = \emptyset$
    \item if $\lang$ is lengthening insensitive, then $\lang \cap \mathit{Red}_I(\lang') \neq \emptyset \implies \lang \cap \lang' \neq \emptyset$.
\end{itemize}
\end{thm}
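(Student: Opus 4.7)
The plan is to prove each implication directly from the definitions, by using the key property of a reduction function that $I(r) \pre r$ for every $r$, and then applying shortening or lengthening insensitivity of $\lang$ to transport membership along the $\pre$ relation.

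For the first item, I would prove the contrapositive: assume $\lang \cap \lang' \neq \emptyset$ and exhibit a witness in $\lang \cap \mathit{Red}_I(\lang')$. Let $r \in \lang \cap \lang'$. By Definition~\ref{def:reduction}, $I(r) \in \mathit{Red}_I(\lang')$ and $I(r) \pre r$. Since $r \in \lang$ and $\lang$ is shortening insensitive, Definition~\ref{def:shortins} yields $I(r) \in \lang$. Thus $I(r) \in \lang \cap \mathit{Red}_I(\lang')$, contradicting emptiness.

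For the second item, I would argue directly: assume $\lang \cap \mathit{Red}_I(\lang') \neq \emptyset$ and extract a witness in $\lang \cap \lang'$. Let $w \in \lang \cap \mathit{Red}_I(\lang')$. By definition of $\mathit{Red}_I$, there exists $r \in \lang'$ with $w = I(r)$, and $w \pre r$ by the defining property of $I$. Since $w \in \lang$ and $\lang$ is lengthening insensitive, Definition~\ref{def:longins} gives $r \in \lang$, so $r \in \lang \cap \lang'$.

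Both implications rest on the same two-line pattern: a witness on one side of the reduction yields a witness on the other side by moving one step along $\pre$, which is exactly what the two insensitivity hypotheses allow. There is no real obstacle here; the only thing to be careful about is the direction of the $\pre$ arrow relative to the direction of insensitivity, which is why the two cases produce dual statements (a shortening insensitive $\lang$ certifies absence, a lengthening insensitive $\lang$ certifies presence). One could equivalently observe that Property~\ref{prop:shortduallong} makes the two items formally dual through complementation, but a direct proof is shorter and more transparent.
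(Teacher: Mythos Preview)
Your proof is correct and follows essentially the same approach as the paper: for the shortening case the paper also argues (in effect) by contraposition that a run $r \in \lang \cap \lang'$ would force its reduced image $r' = I(r)$ into $\lang \cap \mathit{Red}_I(\lang')$, and for the lengthening case it likewise lifts a witness $r' \in \lang \cap \mathit{Red}_I(\lang')$ to the longer $r \in \lang'$ it represents. Your write-up is actually more explicit than the paper's rather terse two lines, but the underlying argument is identical.
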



\begin{proof}
(Shortening insensitive $\lang$) Assume that $\lang \cap \mathit{Red}_I(\lang') = \emptyset$ so there does not exists $r' \in \lang \cap \mathit{Red}_I(\lang')$. Because $\lang$ is shortening insensitive, it is impossible that any word $r$ with $r' \prec r$ belongs to $\lang \cap \lang'$.
(Lengthening insensitive $\lang$) At least one word $r'$ is in $\lang$ and $\mathit{Red}_I(\lang')$. Let $r'=I(r)$ with $r' \in \lang$, since $\lang$ is lengthening insensitive and $r' \pre r$, $r$ must also belong to $\lang$ and thus to $\lang \cap \lang'$.
\end{proof}

With this theorem original to this paper we now can build a semi-decision procedure that is able to prove \emph{some} lengthening or disprove \emph{some} shortening insensitive properties using a reduction of a system. In practice, the language $\lang$ will be the language of the negation of a property, and $\lang'$ and  $\mathit{Red}_I(\lang')$ will be respectively the language of a system and the language of its reduced version.

\section{Application to Verification}
\label{sec:red}
We now introduce the more concrete setting of LTL verification to exploit the theoretical results on languages and their shortening/lengthening sensitivity developed in Section~\ref{sec:defs}.

\subsection{Kripke Structure}
\label{ss:kripke}

From the point of view of LTL verification with a state-based logic, executions of a system (also called \emph{runs}) are seen as infinite words over the alphabet $\Sigma=2^{\AP}$, where $\AP$ is a set of atomic propositions that may be true or false in each state. So each symbol in a run gives the truth value of all of the atomic propositions in that state of the execution, and each time an action happens we progress in the run to the next symbol. Some actions of the system update the truth value of atomic propositions, but some actions can leave them unchanged, which corresponds to stuttering. 

\begin{defi}
[Kripke Structure Syntax] Let $\AP$ designate a set of atomic propositions. 
A Kripke structure $\KS_\AP=\tuple{S,R,\lambda,s_0}$ over $\AP$ is a tuple where $S$ is the set of states, $R \subseteq S \times S$ is the transition relation, $\lambda: S \rightarrow 2^\AP$ is the state labeling function, and $s_0 \in S$ is the initial state.
\end{defi}
\begin{defi}[Kripke Structure Semantics]
\label{def:kripke}
The language $\lang(\KS_\AP)$ of a Kripke structure $\KS_\AP$ is defined over the alphabet $2^\AP$.
It contains all runs of the form $r=\lambda(s_0) \lambda(s_1) \lambda(s_2)\ldots$ where $s_0$ is the initial state of $\KS_\AP$ and $\forall i \in \nat$, either $(s_i,s_{i+1}) \in R$, or if $s_i$ is a deadlock state such that $\forall s' \in S, (s_i,s') \not\in R$ then $s_{i+1}=s_i$.
\end{defi}

All system executions are considered maximal, so that they are represented by infinite runs. If the system can deadlock or terminate in some way, we can extend these finite words by an infinite stutter on the last symbol of the word to obtain a run.

\medskip
\textbf{Example.}
Subfigure (1) of Figure~\ref{fig:example} depicts a program where each thread ($\alpha$ and $\beta$) has  three reachable positions (we consider that each instruction is atomic).
In this example we assume that the only observable atomic propositions are $p$ (true when $x=0$) and $q$ (true when $y=0$). The variable $z$ is not observed.

Subfigure (2) of Figure~\ref{fig:example} depicts the reachable states of this system as a Kripke structure.
Actions of thread $\beta$ (which do not modify the value of $p$ or $q$) are horizontal while actions of thread $\alpha$ are vertical. While each thread has 3 reachable positions, the emission of the message by $\beta$ must precede the reception by $\alpha$ so that some situations are unreachable.
Based on Definition~\ref{def:kripke} we can compute the language $\lang_A$ of this system. It consists of three words: when thread $\beta$ goes first $pq^3$ $\bar p q$ $\bar p \bar q ^\omega$, with an interleaving $pq^2$ $\bar p q^2$ $\bar p \bar q ^\omega$, and when thread $\alpha$ goes first $pq$ $\bar p q^3$ $\bar p \bar q ^\omega$.

In subfigure (3) of Figure~\ref{fig:example}, 
the actions "$z=40;chan.send(z);$"of thread $\beta$ are fused into a single atomic operation.
This is possible because action $z=40$ of thread $\beta$ is stuttering (it cannot affect either $p$ or $q$) and is non-interfering with other events (it neither enables nor disables any event other than subsequent instruction "chan.send(z)"). 
 The language of this smaller KS is a reduction of the language of the original system. It contains two runs: thread $\alpha$ goes first  $pq$ $\bar p q^2$ $\bar p \bar q ^\omega$ and thread $\beta$ goes first $pq^2$ $\bar p q$ $\bar p \bar q ^\omega$.

In subfigure (4) of Figure~\ref{fig:example}, 
the already fused action "$z=40;chan.send(z);$"of thread $\beta$ is further fused with the \textit{chan.recv();} action of thread $\alpha$.
This leads to a smaller KS whose language is still a reduction of the original system now containing a single run: $pq$ $\bar p q$ $\bar p \bar q ^\omega$.
This simple example shows the power of structural reductions, when they are applicable, with a reduction of the initial language to a single word.

\subsection{Automata theoretic verification}

Let us consider the problem of model-checking of an $\omega$-regular  property $\varphi$ (such as one described by an LTL formula) on a system using the automata-theoretic approach~\cite{Vardi07}. In this approach, we wish to answer the problem of language inclusion: do all runs of the system $\lang(\KS)$ belong to the language of the property $\lang(\varphi)$ ? To do this, when the property $\varphi$ is an $\omega$-regular language (e.g. an LTL or PSL formula), we first negate the property $\lnot\varphi$, then build a (variant of) a Büchi automaton $A_{\lnot\varphi}$ whose language\footnote{Because computing the complement $\bar{A}$ of a Büchi automaton $A$ is worse than exponential in the worst case~\cite{yan.08.lmcs,schewe.12.atva}, syntactically negating $\varphi$ and producing an automaton $A_{\lnot \varphi}$ is preferable when $A$ is derived from e.g. an LTL formula.} 
 consists of all the runs that do not satisfy $\varphi$ i.e. $\lang(A_{\lnot\varphi}) = \Sigma^\omega \setminus \lang(\varphi)$.
We then perform a synchronized product between this Büchi automaton and the Kripke structure $\KS$ corresponding to the system's state space $A_{\lnot\varphi} \otimes \KS$ (where $\otimes$ is defined to satisfy $\lang(A \otimes B)=\lang(A) \cap \lang(B)$). Either the language of the product is empty $\lang(A_{\lnot\varphi} \otimes \KS) = \emptyset$, and the property $\varphi$ is thus true of this system, or the product is non empty, and from any run in the language of the product we can build a counter-example to the property.

We will consider in the rest of the paper that the shortening or lengthening  insensitive language of Definitions~\ref{def:shortins} and~\ref{def:longins} is given as an omega-regular language or Büchi automaton typically obtained from the negation of an LTL property, and that the reduction of Definition~\ref{def:reduction} is applied to a language that corresponds to all runs in a Kripke structure typically capturing the state space of a system.

\medskip
\textbf{LTL verification with reductions.} With Theorem~\ref{th:short}, a shortening insensitive property shown to be true on the reduction (empty intersection with the language of the negation of the property) is also true of the original system. A lengthening insensitive property shown to be false on the reduction (non-empty intersection with the language of the negation of the property, hence counter-examples exist) is also false in the original system.
Unfortunately, our procedure cannot prove using a reduction that a shortening insensitive property is false, or that a lengthening insensitive property is true. We offer a semi-decision procedure.

\subsection{Detection of language sensitivity}
\label{sec:detection}

We now present a strategy to decide if a given property expressed as a Büchi automaton is shortening insensitive, lengthening insensitive, or both.

This section relies heavily on the operations introduced and discussed at length in~\cite{ADL15}. 
The authors define two \emph{syntactic} transformations $sl$ and $cl$ of a transition-based generalized Büchi automaton (TGBA) $A_\varphi$ that can be built from any LTL formula $\varphi$ to represent its language $\lang(\varphi)=\lang(A_\varphi)$~\cite{CouvreurFM99}. TGBA are a variant of Büchi automata where the acceptance conditions are placed on edges rather than states of the automaton. 

The $cl$ closure operation \emph{decreases stutter}, it adds to the language any word $r' \in \Sigma^\omega$ that is shorter than a word $r$ in the language. Informally, the strategy consists in detecting when a sequence $q_1 \tr{a} q_2 \tr{a} q_3$ is possible and adding an edge $q_1 \tr{a} q_3$, hence its name $cl$ for ``closure''. 
The $sl$ self-loopization operation \emph{increases stutter}, it adds to the language any run $r' \in \Sigma^\omega$ that is longer than a run $r$ in the language.
Informally, the strategy consists in adding a self-loop to any state so that we can always decide to repeat a letter (and stay in that state) rather than progress in the automaton, hence its name $sl$ for ``self-loop''. 
More formally $\lang(cl(A_\varphi))=\{r' \mid \exists r \in \lang(A_\varphi), r' \pre r \}$ and $\lang(sl(A_\varphi))=\{r' \mid \exists r \in \lang(A_\varphi), r \pre r' \}$.

Using these operations \cite{ADL15} shows that there are several possible ways to test if an omega-regular language (encoded as a Büchi automaton) is stutter insensitive: 
essentially applying either of the operations $cl$ or $sl$ should leave the language unchanged. This allows one to recognize that a property is stutter insensitive even
though it syntactically contains e.g. the neXt operator of LTL.

For instance $A_\varphi$ is stutter insensitive if and only if $\lang(sl(cl(A_\varphi)) \otimes A_{\lnot \varphi}) =\emptyset$.
The full test is thus simply reduced to a language emptiness check testing that both $sl$ and $cl$ operations
leave the language of the automaton unchanged.

Indeed for stutter insensitive languages, all or none of the runs belonging to a given stutter equivalence class of runs $\hat{r}$ must belong to the language $\lang(A_\varphi)$. 
In other words, if shortening or lengthening a run can make it switch from belonging to $A_\varphi$ to belonging to $A_{\lnot \varphi}$, the language is stutter sensitive. 
This is apparent on Figure~\ref{fig:lang}.

We want weaker conditions here, but we can reuse the $sl$ and $cl$ operations developed for testing stutter insensitivity.
Indeed for an automaton $A$ encoding a shortening insensitive language, $\lang(cl(A))=\lang(A)$ should hold.
Conversely if $A$ encodes a lengthening insensitive language, $\lang(sl(A))=\lang(A)$ should hold.
We express these tests as emptiness checks on a product in the following way.

\begin{thm}
\label{th:sitest}
[Testing sensitivity]
Let $A$ designate a Büchi automaton, and $\bar{A}$ designate its complement.

$\lang(cl(A) \otimes \bar{A}) = \emptyset$, if and only if $A$ defines a shortening insensitive language.

$\lang(sl(A) \otimes \bar{A}) = \emptyset$ if and only if $A$ defines a lengthening insensitive language.
\end{thm}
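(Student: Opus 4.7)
The plan is to reduce both equivalences to the language-theoretic identity
$\lang(B \otimes \bar A) = \lang(B) \cap \overline{\lang(A)}$,
which is empty if and only if $\lang(B) \subseteq \lang(A)$. The two statements will then follow from characterizing shortening (resp. lengthening) insensitivity of $A$ as an equality between $\lang(A)$ and $\lang(cl(A))$ (resp. $\lang(sl(A))$).

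First I would observe that, because $\pre$ is reflexive, every $r \in \lang(A)$ witnesses itself in the defining sets of $cl$ and $sl$, so that the inclusions $\lang(A) \subseteq \lang(cl(A))$ and $\lang(A) \subseteq \lang(sl(A))$ hold unconditionally. This turns the only nontrivial part of the equalities $\lang(cl(A)) = \lang(A)$ and $\lang(sl(A)) = \lang(A)$ into the reverse inclusions.

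Next I would prove the key characterization by direct unfolding of the definitions. For the shortening direction: if $\lang(A)$ is shortening insensitive, then for any $r' \in \lang(cl(A))$ there exists $r \in \lang(A)$ with $r' \pre r$, hence $r' \in \lang(A)$ by Definition~\ref{def:shortins}; conversely, if $\lang(cl(A)) \subseteq \lang(A)$, then for every $r \in \lang(A)$ and every $r' \pre r$ we have $r' \in \lang(cl(A)) \subseteq \lang(A)$, establishing shortening insensitivity. The lengthening case is symmetric, using Definition~\ref{def:longins} and the dual description of $\lang(sl(A))$.

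Finally, I would assemble the two pieces. Using $\lang(cl(A) \otimes \bar A) = \lang(cl(A)) \cap \overline{\lang(A)}$, emptiness of this product is equivalent to $\lang(cl(A)) \subseteq \lang(A)$, which together with the always-true reverse inclusion is equivalent to $\lang(cl(A)) = \lang(A)$, and hence to shortening insensitivity by the characterization above. The same three-step chain, with $sl$ in place of $cl$, yields the lengthening statement. The proof has no real obstacle: the only thing to be careful about is to invoke the semantics of $cl$ and $sl$ exactly as stated in~\cite{ADL15} (and recalled just before the theorem), namely $\lang(cl(A)) = \{r' \mid \exists r \in \lang(A),\ r' \pre r\}$ and $\lang(sl(A)) = \{r' \mid \exists r \in \lang(A),\ r \pre r'\}$, since the whole argument hinges on these two equalities rather than on any combinatorial property of the syntactic constructions themselves.
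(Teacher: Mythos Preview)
Your proposal is correct and follows exactly the same approach as the paper: reduce the emptiness of the product to the equality $\lang(cl(A))=\lang(A)$ (resp.\ $\lang(sl(A))=\lang(A)$), which in turn characterizes shortening (resp.\ lengthening) insensitivity. The paper's proof is a one-line version of what you wrote; you simply spell out the inclusion arguments and the use of reflexivity of $\pre$ that the paper leaves implicit.
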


\begin{proof}
The expression $\lang(cl(A) \otimes \bar{A}) = \emptyset$ is equivalent to $\lang(cl(A))=\lang(A)$. The lengthening insensitive case is similar.
\end{proof}

Thanks to property~\ref{prop:shortduallong}, and in the spirit of~\cite{ADL15} we could also test the complement of a language for the dual property if that is more efficient, i.e. $\lang(sl(\bar{A}) \otimes A) = \emptyset$ if and only if $A$ defines a shortening insensitive language and similarly $\lang(cl(\bar{A}) \otimes A) = \emptyset$ iff $A$ is lengthening insensitive. We did not really investigate these alternatives as the complexity of the test was already negligible in all of our experiments. 

Overall complexity of these tests is dominated by the complement operation on the automaton $A$ to compute $\bar{A}$; this operation is worst case exponential in the size of $A$. However, when the automaton $A_\varphi$ is obtained by translating a LTL formula $\varphi$, we can compute $A_{\lnot\varphi}$ the automaton for the negation of $\varphi$ instead of the complement $\bar{A_\varphi}$, avoiding this exponential. The $cl$ operation only introduces new edges, so $cl(A)$ is worst case quadratic in the number of states of $A$. The $sl$ operation unfortunately may require adding new states, to allow to stutter on the last seen valuation of the system (a valuation is in $2^{AP}$), hence is worst case exponential over the number of atomic propositions in $A$, but formulas in our experiments do not have more than 5 atomic propositions so this complexity does not dominate the procedure. The emptiness check computes the product of two automata and which is worst case linear to the product of the sizes of the two automata.


\subsection{Semi Decision Procedure}

With these elements we can describe a semi decision procedure to verify a property $\varphi$ on a system represented by a Kripke structure $\KS$.

\begin{enumerate}
\item Decide if $\varphi$ is shortening insensitive or lengthening insensitive using Theorem~\ref{th:sitest}. If neither is true, abort the procedure.
\item Reduce the system $\KS$ using the atomic propositions of $\varphi$ as observed alphabet $\KS'=\mathit{Reduce}(\KS,\varphi)$. Thus $\KS'$ is a reduction in the sense of Definition~\ref{def:reduction}.
\item Use a model-checker to verify whether the reduced system $\KS'$ satisfies $\varphi$, i.e. whether $\lang(A_{\lnot\varphi} \otimes \KS')=\emptyset$.
\item If $\KS'$ satisfies $\varphi$ and $\varphi$ is \emph{shortening insensitive} or $\KS'$ does not satisfy $\varphi$ and $\varphi$ is \emph{lengthening insensitive}, conclude. 
\item Otherwise abort the procedure, and run a model-checker using the original system and property.
\end{enumerate}

Note that the complexity of this procedure is dominated by the model-checking step, since $\KS$ is typically extremely large in front of $A_{\lnot\varphi}$. If the reduction is successful, $\KS'$ can be up to exponentially smaller than $\KS$, but this depends on the size of the observed alphabet as well as on the system definition. The experiments presented in Section~\ref{sec:perfs} confirm that in most cases verifying on the reduced system is much cheaper than doing model-checking on the original system.

\subsection{Agglomeration of events produces shorter runs}
\label{sec:aggloShort}
Structural reductions are one of the possible strategies to reduce the complexity of analyzing a system.
Depending on the input formalism the terminology used is different, but the main results remain stable.

In~\cite{Lipton75} \emph{transaction reduction} consists in fusing two adjacent actions of a thread (or even across threads in recent versions such as~\cite{Laarman18} ). 
The first action must not modify atomic properties and must commutate with any action of other threads. Fusing these actions leads to shorter runs, where a stutter is lost. 
In the program of Fig.~\ref{fig:example}, "z=40" is enabled from the initial state and must happen before "chan.send(z)", but it commutes with instructions of thread $\alpha$ and is not observable. 
Hence the language $\lang_B$ built with an atomicity assumption on "z=40;chan.send(z)" is indeed a reduction of $\lang_A$.

Let us reason at the level of a Kripke structure.
The goal of such reductions is to structurally detect the following situation in language $\lang$: let $r=a_0^{n_0}a_1^{n_1}a_2^{n_2}\ldots$ designate a run (not necessarily in the language). Then there must exist two indexes $i$ and $j$ such that for any natural number $k$, $i \leq k \leq j$, $r_k=a_0^{n_0}\ldots a_i^{n_i}\ldots a_k^{n_k + 1} \ldots a_j^{n_j} \ldots$ is in the language. In other words, the set of runs $\{ r_k=a_0^{n_0}\ldots a_i^{n_i}\ldots a_k^{n_k + 1} \ldots a_j^{n_j} \ldots \mid i \leq k \leq j\}$ must be included in the language. 
This corresponds to an event that does not impact the truth value of atomic propositions (it stutters) and can be freely commuted with any event that occurs between indexes $i$ and $j$ in the run. This event is simply constrained to occur at the earliest at index $i$ in the run and at the latest at index $j$. In Fig.~\ref{fig:example} the event "z=40" can happen as early as in the initial state, and must occur before "chan.send(z)" and thus matches this definition.

Note that these runs are all stutter equivalent, but are incomparable by the shorter than relation (e.g. $aabc\sigma, abbc\sigma, abcc\sigma$ are incomparable). In this situation, a \emph{reduction} can choose to only represent the run $r$ (e.g. $r=abc\sigma$ represents all these runs) instead of any of these runs. This run was not originally in the language in general, but it is indeed shorter than any of the $r_k$ runs so it matches definition~\ref{def:reduction} for a reduction. Note that the stutter-equivalent class $\hat{r}$ of $r$ does contain all these longer runs so that in a stutter insensitive context, examining $r$ is enough to conclude for any of the runs in $\hat{r}$. This is why usage of structural reductions is compatible with verification of a logic such as $LTL_{\setminus X}$ and has been proposed for that express purpose in the literature~\cite{PPP00,HPP06,Laarman18}.

Thus transaction reductions~\cite{Lipton75,Laarman18} as well as  
both pre-agglomeration and post-agglomeration of Petri nets~\cite{PPP00,EHPP05,HPP06,YTM20} produce a system whose language is a reduction of the language of the original system. 

A formal definition involves a) introducing the syntax of a formalism and b) its semantics in terms of language, then c) defining the reduction rule, and d) proving its effect  is a reduction at the language level. The exercise is not particularly difficult, and the definition of reduction rules mostly fall into the category above, where a non observable event that happens at the earliest at point $i$ in the run and at the latest at point $j$ in the run is abstracted from the trace.
As an example, we provide in Section~\ref{sec:petri} such a proof for structural agglomerations rules on Petri nets.

Our experimental Section~\ref{ss:modelchecking} uses slightly extended versions of these rules (defined in~\cite{YTM20}) for (potentially partial) pre and post-agglomeration.
That paper presents $22$ structural reductions rules from which we selected the rules valid in the context of LTL verification. Only one rule preserving stutter insensitive LTL was not compatible with our approach since it does not produce a reduction at the language level: rule $3$ ``Redundant transitions'' proposes that if two transitions $t_1$ and $t_2$ have the same combined effect as a transition $t$, and firing $t_1$ enables $t_2$, $t$ can be discarded from the net. This reduces the number of edges in the underlying \KS{} representing the state space, but does not affect reachability of states. However, it selects as representative a run involving both $t_1$ and $t_2$ that is longer than the one using $t$ in the original net, it is thus not legitimate to use it in our strategy (although it remains valid for $LTL_{\setminus X}$). 
Rules $14$ ``Pre agglomeration'' and $15$ ``Post agglomeration'' are the most powerful rules of~\cite{YTM20} that we are able to apply in our context. They are known to preserve $LTL_{\setminus X}$ (but not full LTL) and their effect is a reduction at the language level, hence we \emph{can} use them when dealing with shortening/lengthening insensitive formulae. 

\subsection{Petri net agglomeration rules}
\label{sec:petri}

This section presents agglomeration rules for Petri nets, and shows that this transformation of the 
structure of a net leads to a reduction of the associated language in the sense of Definition~\ref{def:reduction}.
All of this section is new content with respect to~\cite{forte22}.

\begin{defi}
[Structure] A Petri net $N=\tuple{\Places,\Trans,\Pre,\Post, m_0}$ is a tuple where \Places{} is the finite set of places, \Trans{} is the finite set of  transitions, $\Pre: \Places \times \Trans \rightarrow \nat$ and $\Post: \Places \times \Trans \rightarrow \nat$ represent the pre and post incidence matrices, and $m_0: \Places \rightarrow \nat$ is the initial marking.
\end{defi}

\textbf{Notations.} 
We use $p$ (resp. $t$) to designate a place (resp. transition) or its index dependent on the context. We represent markings $m$ as vectors of natural numbers with $|\Places|$ entries. We also describe $\Pre(t)$ and $\Post(t)$, for any given transition $t$, as vectors with $|\Places|$ entries.
In vector spaces, we use $v \geq v'$ to denote $\forall i, v(i)\geq v'(i)$, and we use sum $v+v'$ with the usual element-wise definition.
Since a Petri net is bipartite graph, we note $\Preset{n}$ (resp. \Postset{n}) the pre set (resp. post set) of a node $n$ (place or transition). For example, the pre set of place $p$ is $\Preset{p}= \{ t \in \Trans{} \mid \Post(p,t) > 0 \}$.  

\begin{defi}
[Semantics] The semantics of a Petri net is given by the firing rule $\xrightarrow{t}$ that relates pairs of markings: in any marking $m \in  \nat^{|\Places|}$, if $t \in \Trans$ satisfies $m \geq \Pre(t)$, then $m\xrightarrow{t}m'$ with $m' = m + \Post(t) - \Pre(t)$. 
Given a set of atomic propositions $\AP$ and an evaluation function 
$\lambda:  \nat^{|\Places|} \rightarrow 2^\AP$, we can associate to a net $N=\tuple{\Places,\Trans,\Pre,\Post, m_0}$ the corresponding Kripke structure $\KS_\AP=\tuple{\nat^{|\Places|},\{(m,m') \mid \exists t \in \Trans, m\xrightarrow{t}m'\},\lambda,m_0}$ over $\AP$.
\end{defi}

An atomic proposition is a Boolean formula (using $\lor, \land, \lnot$) built from comparisons ($\bowtie \in \{<, \leq, =, \geq, >\}$) of arbitrary weighted sum of place markings to another sum or a constant, e.g. $\sum_{p \in \Places} \alpha_p \cdot m(p) \bowtie k$, with $\alpha_p \in \zrel$ and $k \in \zrel$.
The \textit{support} of a property expressed over $\AP$ is the set of places whose marking is truly used in an atomic predicate, i.e. such that at least one comparison atom has a non zero $\alpha_p$ in a sum. The support $\mathit{Supp} \subseteq \Places{}$ of the property defines the subset $\Stutter{} \subseteq \Trans$ of \textit{invisible} or \textit{stuttering} transitions $t$ satisfying $\forall p \in \mathit{Supp}, \Pre(p,t)=\Post(p,t)$. 
Hence, given a labeling function $\lambda: \nat^\Places{} \rightarrow 2^\AP$, we can guarantee that $\forall t \in \Stutter{}$ and any $m,m' \in \nat^\Places{}$, if $m\xrightarrow{t}m'$ then $\lambda(m)=\lambda(m')$.
With respect to stutter, the atomic propositions $\AP$ are only interested in the projection of reachable markings over the variables in the support, values of places in $\Places{}\setminus \mathit{Supp}$ are not \emph{observable} in markings. A small support means more potential reductions, as rules  mostly cannot apply to observed places or their neighborhood. 

\medskip

Agglomeration in a place $p$ consists in discarding $p$ and its surrounding transitions (pre and post set of $p$) to build instead a transition for every element in the Cartesian product $\Preset{p} \times \Postset{p}$ that represents the effect of the sequence of firing a transition in $\Preset{p}$ then immediately a transition in $\Postset{p}$. 
This ``acceleration'' of tokens in $p$ reduces interleaving in the state space, but can preserve properties of interest if $p$ is chosen correctly. This type of reduction has been heavily studied~\cite{HPP06,Laarman18} as it forms a common ground between structural reductions, partial order reductions and techniques that stem from transaction reduction. 

We present here two rules that can be used to safely decide if a place can be agglomerated.
Figure~\ref{fig:agglo} shows these rules graphically, using the classical graphical notation for Petri nets.

\begin{defi}
[Pre and Post Agglomeration]
Let $N$ designate a Petri net and $\mathit{Supp} \subseteq \Places{}$ be the support of a property on this net.
Let $p \in \Places$ be a place of $N$ satisfying:

\begin{tabular}{@{}ll@{}}
$p \in \Places\setminus \mathit{Supp}$  & $p$ not in support \\
$m_0(p)=0$ & initially unmarked \\
$\Preset{p} \cap \Postset{p} = \emptyset$ & distinct feeders and consumers \\
$\forall h \in \Preset{p}, \Pre(p,h)=1$ &  feeders produce a single token in $p$ \\ 
$\forall f \in \Postset{p}, \Post(p,f)=1$ &  consumers require a single token in $p$ \\ 
\end{tabular}

Pre agglomeration in $p$ is possible if it also the case that:

\begin{tabular}{ll}
$\forall h \in \Preset{p},$ &  \\ 
& $\left\{
\begin{array}{@{}ll@{}}
h \in \Stutter{} & $feeders are stuttering $\\
\Postset{h}=\{p\} & p$ is the single output of $h \\
\exists p_1 \in \Preset{h}, \Post(p_1,h) < \Pre(p_1,h) & h$ is divergent free$ \\
\forall p_2 \in \Preset{h}, \Postset{p_2}=\{h\} & h$ is strongly quasi-persistent$ \\
\end{array}\right. $ \\
\end{tabular}

Post agglomeration in $p$ is possible if it also the case that:

\begin{tabular}{ll}
$\forall f \in \Postset{p},$  \\
& 
$\left\{
\begin{array}{@{}ll@{}}
f \in \Stutter{} & $all consumers are stuttering $\\
\Preset{f} = \{p\} & $no other inputs to $f$ $ \\
\end{array}\right. $ \\
\end{tabular}

In either of these cases, we agglomerate place $p$ leading to a net where
\begin{itemize}
\item we discard all transitions in $\Preset{p} \cup \Postset{p}$ and place $p$ from the net, and 
\item $\forall h \in \Preset{p}, \forall f \in \Postset{p}, $ we add a new transition $t$ such that $\Pre(t)=\Pre(h) + \Pre(f)$ and $\Post(t)=\Post(h) + \Post(f)$
\end{itemize}
\end{defi}

Pre agglomeration (see Fig.~\ref{fig:preagglo}) looks for a place $p$ such that all predecessor transitions $h \in \Preset{p}$
are stuttering, have $p$ as single output place, and once enabled cannot be disabled by firing
any other transition (hence they commute freely with other transitions that do not consume in $p$).
With these conditions, we can always ``delay'' the firing of $h$ until it becomes relevant to
enable a transition that consumes from $p$. This will yield a Petri net whose language is a reduction
of the original one.

Conversely, Post agglomeration (see Fig.~\ref{fig:postagglo}) looks for a place $p$ such that all successor transitions $f \in \Postset{p}$
are stuttering, and have $p$ as single input place.
With these conditions, tokens that arrive in $p$ are always free to choose where they want to go, and nothing can prevent them from going there.
Instead of waiting in $p$ until we take this decision, we can make this choice immediately after firing a transition $h$ that places a token in $p$.
This again will yield a Petri net whose language is a reduction of the original one.

\begin{figure}[tbp]
    \centering
  \begin{subfigure}[t]{\linewidth}
  \centering
\includestandalone[width=.48\linewidth]{fig/preagglo1}
\includestandalone[width=.48\linewidth]{fig/preagglo2}
    \caption{
    Pre-agglomeration. The firing of transition $h$ that is stuttering in a run can be delayed until an $f$ transition is fired. The $h$ transition is a right-mover in transaction reduction terms.
    }
    \label{fig:preagglo}
\end{subfigure}
  \begin{subfigure}[t]{\linewidth}
  \centering
\includestandalone[width=.48\linewidth]{fig/postagglo1}
\includestandalone[width=.48\linewidth]{fig/postagglo2}
    \caption{
    Post-agglomeration. The firing of transition $f$ that is stuttering in a run can be advanced to immediately after the  $h$ transition is fired. The $f$ transition is a left-mover in transaction reduction terms.
    }
    \label{fig:postagglo}
\end{subfigure}
\caption{Pre and Post agglomeration rules, graphically represented. Figures to the left are before agglomeration and to the right after the transformation. Places in red cannot be part of the support of the property.}
\label{fig:agglo}
\end{figure}

\begin{pty}
Let $N$ be a Petri net, and $p$ a place of this net satisfying one of the agglomeration criterion. 
The agglomeration in $p$ of $N$ produces a net whose language is a reduction of the language of $N$.
\end{pty}

\begin{proof} \textit{(Sketch)}
Consider an execution of $N$ of the form:
$m_0 \xrightarrow{t_0} \ldots m_i \xrightarrow{h} m_{i+1} \xrightarrow{t_{i+1}} 
\ldots m_j \xrightarrow{f} m_{j+1} \ldots$
where $h \in \Preset{p}$, $f \in \Postset{p}$ and $\forall k \in \nat, i < k < j$, 
$t_k \not\in \Preset{p} \cup \Postset{p}$.
For pre-agglomeration of $N$ in $p$, the image of this execution in the agglomerated net is of the form:
$m_0 \xrightarrow{t_0} \ldots m_i \xrightarrow{t_{i+1}} m_{i+1}' \ldots m_j' \xrightarrow{h.f } m_{j+1} \ldots$ where the firing of $h$ has shifted to the right until it is fused with the subsequent firing of $f$.
Indeed, we have the strong diamond property that states commutativity of $h$ with any of the transitions between $h$ and $f$ in the execution. Let $t$ be such a transition, if $\exists m,m1,m' \in \nat^\Places{}$ such that  $m \xrightarrow{h} m_1 \xrightarrow{t} m'$ then $\exists m_2, m \xrightarrow{t} m_2 \xrightarrow{h} m'$.
Indeed $h$ cannot be disabled by any transition once it is enabled (due to ``strongly quasi persistent'' property), and firing of $h$ cannot enable any transition not in $\Postset{p}$ (since $\Postset{h}=\{p\}$). 
The semantics of a Petri net also induce that firing $h$ before or after $t$ lead to the same marking.
Furthermore, the underlying run in the agglomerated net is shorter (by one repetition) than the original run, because $h$ is guaranteed to stutter (it is one of the constraints that $h \in \Stutter{}$).

If the execution of the original net contains a firing of transition $h$ but not of $f$, 
$m_0 \xrightarrow{t_0} \ldots m_i \xrightarrow{h} m_{i+1} \ldots$
where $h \in \Preset{p}$, and $\forall k \in \nat, k > i$, $t_k \not\in \Postset{p}$, the image is the execution
$m_0 \xrightarrow{t_0} \ldots m_i \xrightarrow{t_{i+1}} m_{i+1}' \ldots$ where the transition $h$ is never fired.
Since $h$ is stuttering, this corresponding run is shorter than the original (by one repetition).

If the execution does not contain any firing of a transition in $\Preset{p} \cup \Postset{p}$, the image of the execution is the execution itself which is legitimate since identity is a reduction.
Lastly, the divergent free constraint ensures that $h$ cannot be fired infinitely many times in succession so that
the original net cannot exhibit an execution of the form $m_0 \ldots m' \xrightarrow{h} m'' \xrightarrow{h}\ldots$
(ending on infinitely many firings of $h$) that would no longer exist in the agglomerated net.

A similar reasoning is possible for post agglomeration, given an execution 
$m_0 \xrightarrow{t_0} \ldots m_i \xrightarrow{h} m_{i+1} \xrightarrow{t_{i+1}} 
\ldots m_j \xrightarrow{f} m_{j+1} \ldots$
where $h \in \Preset{p}$, $f \in \Postset{p}$ and $\forall k \in \nat, i < k < j$, 
$t_k \not\in \Preset{p} \cup \Postset{p}$ of the original net, the image in the agglomerated net is 
$m_0 \xrightarrow{t_0} \ldots m_i \xrightarrow{h.f} m_{i+1}' \xrightarrow{t_{i+1}}\ldots$ where the firing of $f$ has shifted to the left until it is fused with the preceding firing of $h$.
Again the diamond property applies, $f$ is enabled as soon as $h$ fires and cannot subsequently be disabled by other transitions.
Because $f$ is constrained to be stuttering, the corresponding run is shorter than the original one.
\end{proof}

The pre and post agglomeration rules presented here are variants on the rules initially introduced by Berthelot~\cite{Berthelot85}, and have been used in the context of stutter insensitive LTL model-checking~\cite{PPP00,HPP06}.

\section{Experimentation}
\label{sec:perfs}

\subsection{A Study of Properties}
\label{sec:perfsLTL}

This section provides an empirical study of the applicability of the techniques presented in this paper to LTL properties found in the literature. To achieve this we explored several LTL benchmarks~\cite{etessami.00.concur,somenzi.00.cav,dwyer.98.fmsp,rers21,mcc:2021}.
Some work~\cite{etessami.00.concur,somenzi.00.cav} summarizes  the typical properties that users express in LTL. The formulae of this benchmark have been extracted  directly from the literature. 
Dwyer et al.~\cite{dwyer.98.fmsp}  propose property specification patterns, expressed in several logics including LTL. These patterns have been extracted by analysing 447 formulae coming from real world projects.
The RERS challenge~\cite{rers21} presents generated formulae inspired from real world reactive systems. 
    The MCC~\cite{mcc:2021} benchmark establishes a huge database of $45152$ LTL formulae in the form of $1411$ 
    Petri net models coming from $114$ sources with $32$ random LTL formulae for each one. These formulae use up to $5$ state-based atomic propositions (e.g. "$m(p0)+m(p1)>2$" or "$t1$ is fireable"), limit the nesting depth of temporal operators to $5$ and are filtered by the organizers in order to be non trivial. Since these formulae come with a concrete system we were able to use this benchmark to also provide performance results for our approach in Section~\ref{ss:modelchecking}. We retained $43989$ model/formula pairs from this benchmark; the missing $1163$ were rejected due to parse limitations of our tool when the model size is excessive ($>10^7$ transitions).
This set of roughly $2200$ formulae exhibiting real patterns and $44,000$ random ones lets us evaluate if the fragment of LTL that we consider is common in practice. 
Table~\ref{tab:formulae} summarizes, for each benchmark, the number and percentage of formulae that are either stuttering insensitive, lengthening insensitive, or shortening insensitive. 
The sum of both shortening and lengthening  formulae represents more than one third (and up to 60 percent) of the formulae of these benchmarks. 

Concerning the polarity, although lengthening insensitive formulae seem to appear more frequently, 
most of these benchmarks actually contain each formula in both positive and negative forms (we retained only one) so that the summed percentage might be more relevant as a metric since lengthening insensitivity of $\varphi$ is equivalent to shortening insensitivity of $\lnot\varphi$.
Analysis of the human generated Dwyer patterns~\cite{dwyer.98.fmsp} reveals that shortening/lengthening insensitive formulae mostly come from the patterns \textit{precedence chain}, \textit{response chain} and \textit{constrained chain}.
These properties specify causal relation between events, which are observable as causal relations 
between \emph{observably different} states (that might be required to strictly follow each other), but this causality chain is not impacted by non observable events.


\begin{figure}

    \begin{tabular}{l||r|r|r|r|r}
        Benchmark      & Total & SI & LI & ShI & LS \\
\hline
         Dwyer et al.~\cite{dwyer.98.fmsp} & 55 & 32 (58\%)     & 13  (24\%) & 9 (16\%)   & 1 (2\%)     \\
         Spot~\cite{etessami.00.concur,dwyer.98.fmsp,somenzi.00.cav} & 94  & 63 (67\%)    & 17 (18\%)   & 11 (12\%)   & 3 (3\%)    \\
         RERS~\cite{rers21}. & 2050 & 714 (35\%)  & 777  (38\%) & 559 (27\%) & 0 \\
         MCC~\cite{mcc:2021} & 43989 & 24462 (56\%) & 6837 (16\%) &5390 (12\%) & 7300 (17\%)  \\
    \end{tabular}
    \captionof{table}{\small{Sensitivity to length of properties measured using several LTL benchmarks. We distinguish stutter insensitive (SI), lengthening insensitive (LI), shortening insensitive (ShI), and length sensitive (LS) formulae.}}
    \label{tab:formulae}
\end{figure}
\begin{figure}
  \begin{subfigure}[t]{0.45\textwidth}
    \includegraphics[width=\textwidth]{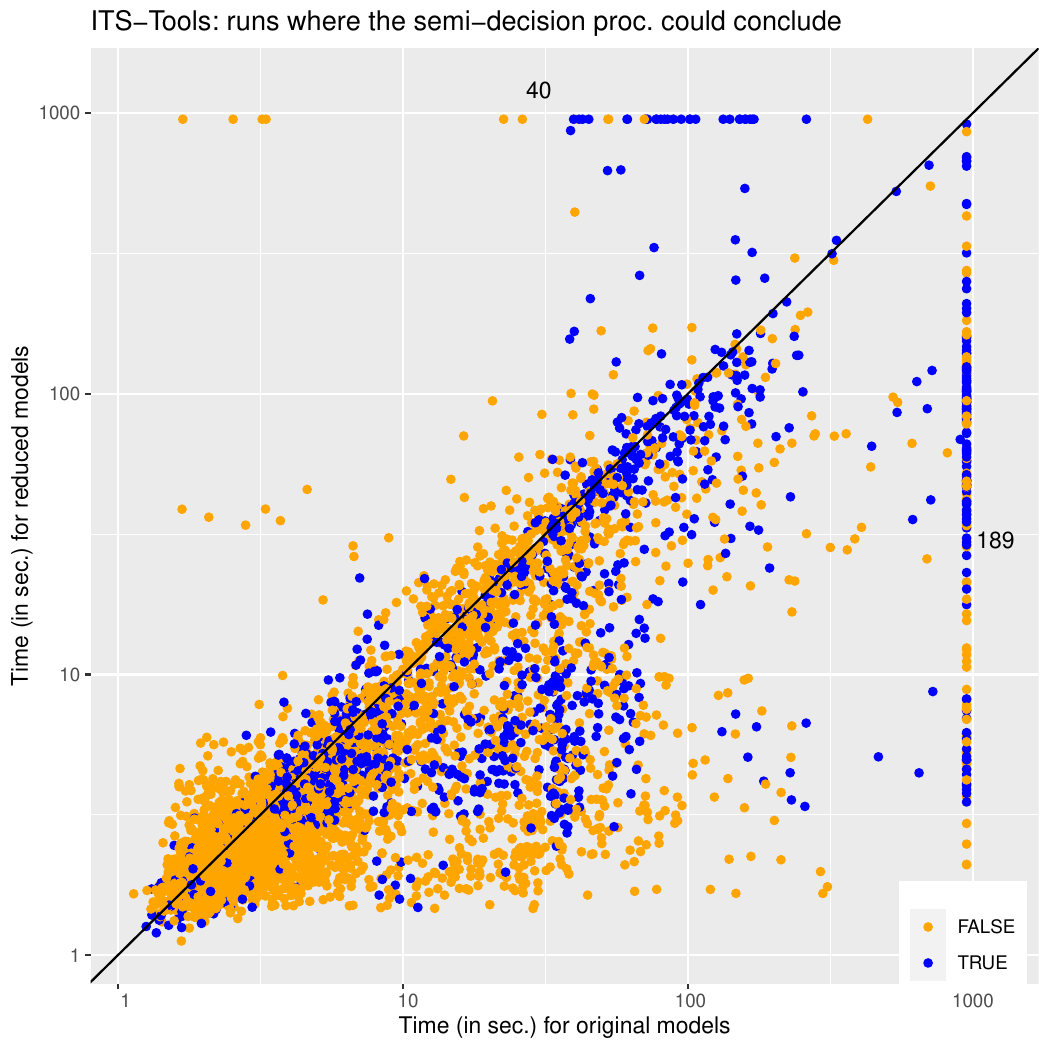}
    \caption{ITS-tools on decidable instances}
    \label{fig:1}
  \end{subfigure}
 \hfill
  \begin{subfigure}[t]{0.45\textwidth}
    \includegraphics[width=\textwidth]{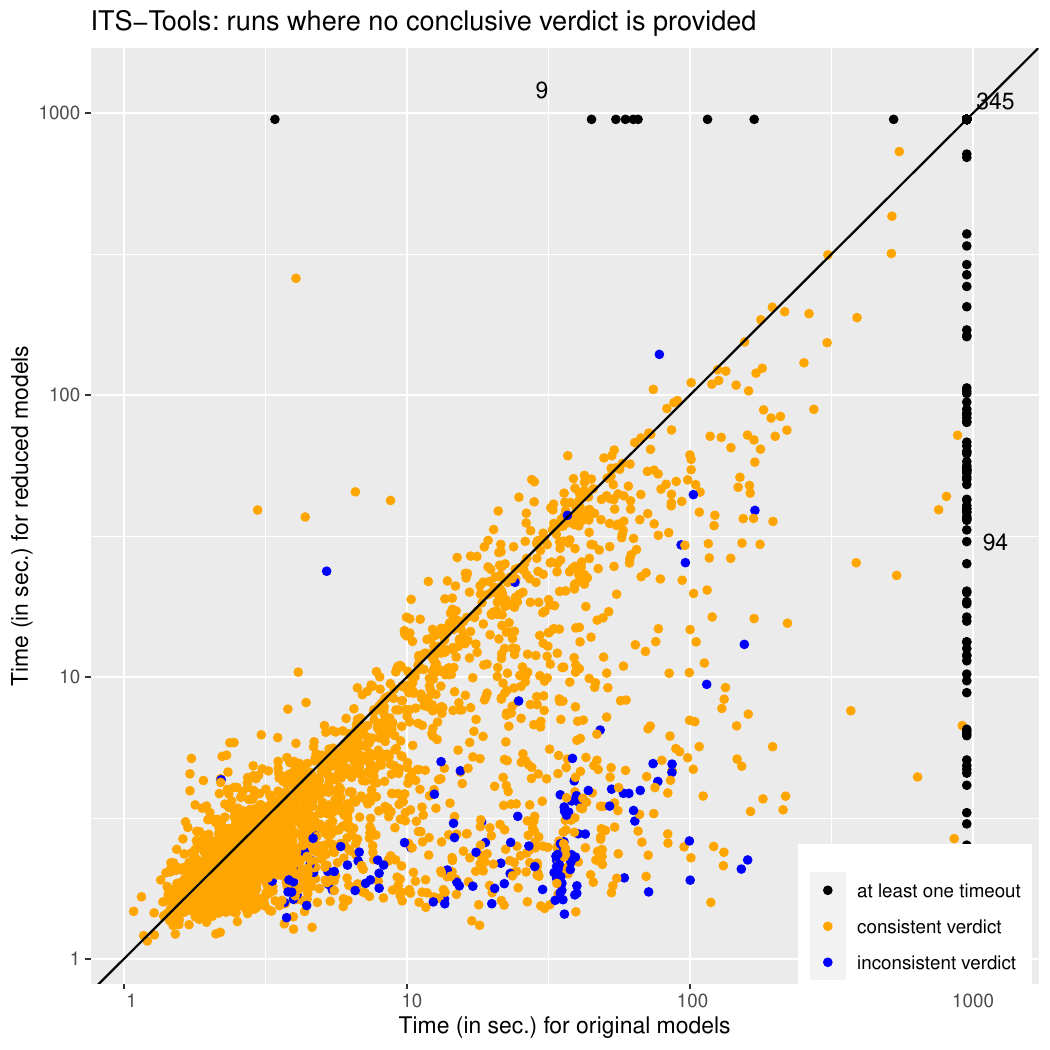}
    \caption{ITS-tools on non decidable instances}
    \label{fig:2}
  \end{subfigure}
  \begin{subfigure}[t]{0.45\textwidth}
    \includegraphics[width=\textwidth]{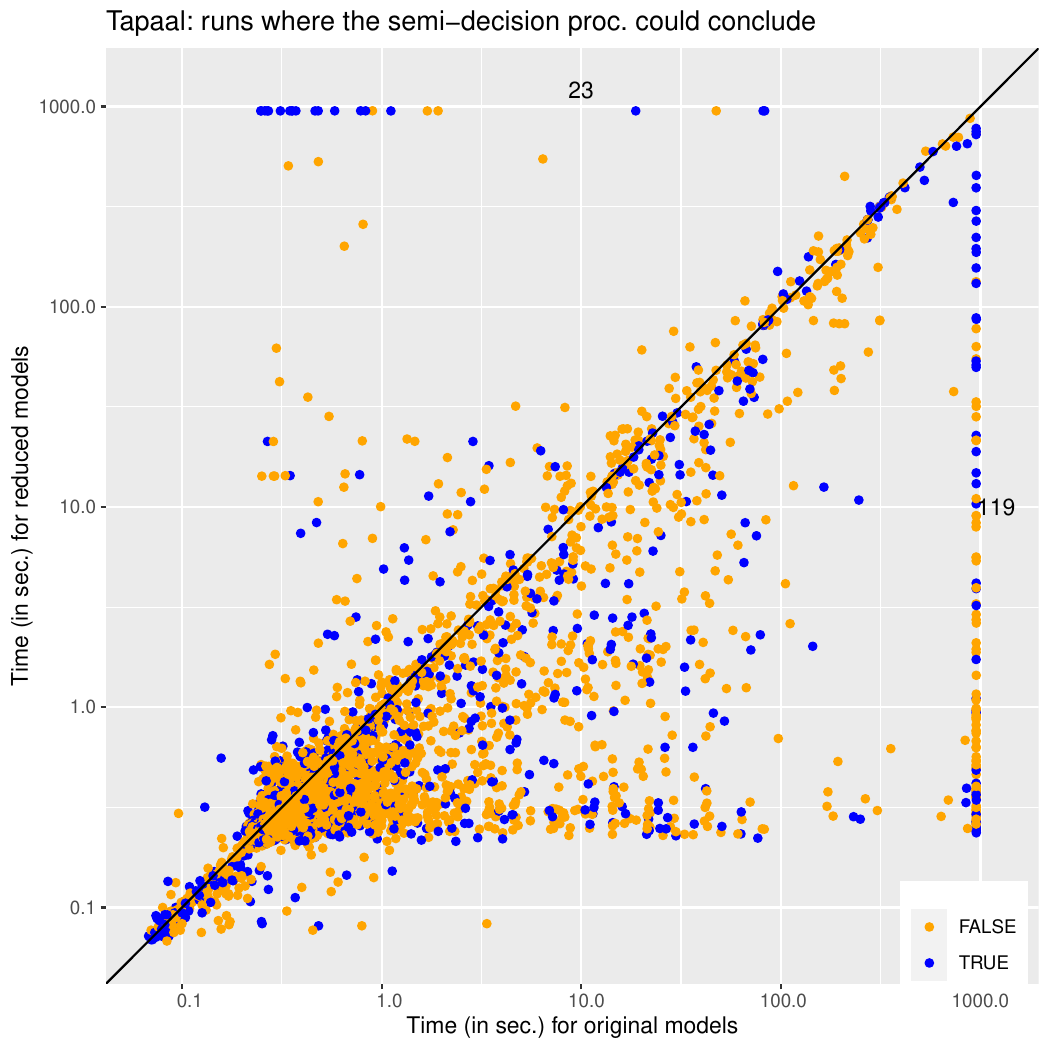}
    \caption{Tapaal on decidable instances}
    \label{fig:3}
  \end{subfigure}
 \hfill
  \begin{subfigure}[t]{0.45\textwidth}
    \includegraphics[width=\textwidth]{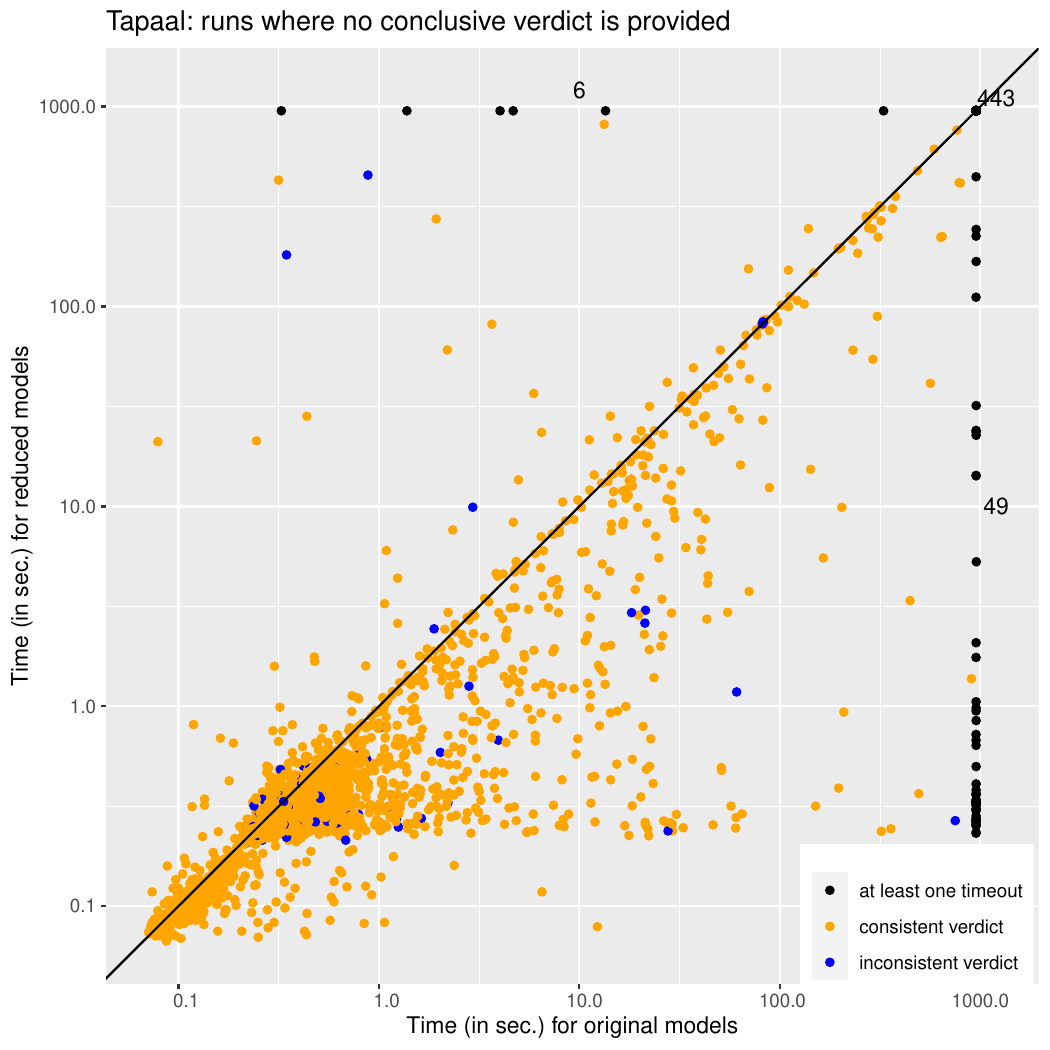}
    \caption{Tapaal on non decidable instances}
    \label{fig:4}
  \end{subfigure}
  \caption{\small{Experiments on the MCC'2021 LTL benchmark using the two best tool of the MCC contest: Tapaal and ITS-tools. These plots show the results on the $9222$ examples (21\% of the original $43989$ examples of the MCC) where the formula is shortening or lengthening insensitive (but not both) and the system can be reduced.
Figures (a) and (c) contain the cases where the verdict of the semi-decisions procedures is reliable, 
and distinguish cases where the output is True (empty product) and False (non empty product). 
(b) and (d) display the cases where the verdict is not reliable and distinguish cases where the output is
inconsistent with the ground truth from cases where they agree. }
}
\label{fig:benchmark}

\end{figure}

\subsection{A Study of Performances}
\label{ss:modelchecking}

\textbf{Benchmark Setup.}
Among the LTL benchmarks presented in Table~\ref{tab:formulae}, we opted for the MCC benchmark 
to evaluate the techniques presented in this paper. 
This benchmark seems relevant since 
(1) it contains both academic and industrial models, 
(2) it has a huge set of (random) formulae and
(3)  includes models so that we could measure the effect of the approach in a model-checking setting. 
The model-checking competition (MCC) is an annual event in its $10^{th}$ edition in 2021 where competing tools 
are evaluated on a large benchmark. 
We use the formulae and models from the latest $2021$ edition of the contest, where Tapaal~\cite{tapaal12} was awarded the gold medal
and ITS-Tools~ \cite{ITStools} was silver in the LTL category of the contest.
We evaluate both of these tools in the following performance measures, showing that our strategy is agnostic
to the back-end analysis engine. Our experimental setup consists in two steps.
\begin{enumerate}
    \item Parse the model and formula pair, and analyze the sensitivity of the formula. When the formula is 
    shortening or lengthening insensitive (but not both) output two model/formula pairs: reduced and original.
    The ``original'' version does also benefit from structural simplification rules (e.g. removing redundant places and transitions, implicit places that are always sufficiently marked to fire related transitions\ldots), but we apply only rules that are compatible with full LTL (in particular not enabling agglomeration rules).
    The ``reduced'' version additionally benefits from rules that are reductions at the language level, in the sense of Definition~\ref{def:reduction}. These rules mainly include pre and post agglomeration, that also can enable even more structural simplification rules, since we iterate all applicable reduction rules to a fixed point. 
    The original and reduced model/formula pairs that result from this procedure are then exported in the same format the contest uses. This step was implemented within ITS-tools.
    \item Run an MCC compatible tool on both the reduced and original versions of each model/formula pair and record the time performance and the verdict. 
\end{enumerate}

For the first step, using Spot~\cite{ADL15}, we detect that a formula is either shortening or lengthening insensitive for 99.81\% of formulae in less than 1 second. After this analysis, we obtain 12\,227 model/formula pairs where the formula is either shortening insensitive or lengthening insensitive (but not both). Among these pairs, in $3005$ cases (24.6\%) none of the structural reduction rules we use were applicable. For more details on the precise structural reductions, see the companion artifact paper~\cite{scico24}.
Since our strategy does not improve such cases, we retain the remaining 9\,222 (75.4\%) 
model/formula pairs in the performance plots of Figure~\ref{fig:benchmark}. 
We measured that on average 34.19\% of the places and 32.69\% of the transitions of the models were discarded by reduction rules with respect to the ``original'' model, though
the spread is high as there are models that are almost fully reducible and some that are barely so.
Application of reduction rules is in complexity related to the size of the structure of the net and 
takes less than 20 seconds to compute in 95.5\% of the models.
We are able to treat $9222$ examples (21\% of the original $43989$ model/formula pairs of the MCC) using reductions. All these formulae until now could not be handled using reduction techniques.

For the second step, we measured the solution time for both reduced and original model/formula pairs using the two best tools of the MCC'2021 contest. 
A full tool using our strategy might optimistically first run on the reduced model/formula pair hoping for a definitive answer, but we recommend the use of a portfolio approach where the first definitive answer is kept.
In these experiments we neutrally measured the time for taking a semi-decision on the reduced model vs. the time for taking a (complete) decision on the original model. We then classify the results into two sets, decidable instances are shown on the left of Fig.~\ref{fig:benchmark} and instances that are not decidable (by our procedure) are on the right. 
On ``decidable instances'' our semi-decision procedure could have concluded reliably because the system satisfies the formula and the property is shortening insensitive, or the system does not satisfy the formula and the property is lengthening insensitive. 
Non decidable instances shown on the right are those where the verdict on the reduced model is not to be trusted
(or both the original and reduced procedures timed out).

With this workflow we show that our approach is generic and can be easily implemented on top of any MCC compatible model-checking tool. All experiments were run with a 950 seconds timeout (close to 15 minutes, which is generous when the contest offers 1 hour for 16 properties).
We used a heterogeneous cluster of machines with four cores allocated to each experiment, and ensured that experiments concerning reduced and original versions of a given model/formula 
are comparable (by running them on the same architecture). To help reproducibility of these experiments, a software artifact is presented in~\cite{scico24}, and available from the CodeOcean platform \url{https://doi.org/10.24433/CO.6846969.v1}.

Figure~\ref{fig:benchmark} presents the results of these  experiments.
The results are all presented as log-log scatter plots opposing a run on the original to a run on the reduced model/formula pair.
Each dot represents an experiment on a model/formula pair; a dot below the diagonal indicates that the reduced version was faster to solve, while a point above it indicates a case where the reduced model actually took longer to solve than the original (fortunately there are relatively few)\footnote{An example found in the benchmark is a formula requiring $X X X G p$, original system has no states satisfying $p$ 3 steps from initial state, but the reduced model does; and it takes a very long sequence of steps to reinvalidate $p$}. 
Points that timeout for one (or both) of the approaches are plotted on the line at 950 seconds, we also indicate the number of points that are in this line (or corner) next to it.

The plots on the left  (a) and (c) correspond to ``decidable instances'' while those on the right are not decidable by our procedure.
The two plots on the top correspond to the performance of ITS-tools, while those on the bottom give the results with Tapaal.  
The general form of the results with both tools is quite similar confirming that our strategy is indeed responsible for the measured gains in performance and that they are reproducible. Reduced problems \emph{are}  generally easier to solve than the original.
This gain is in the best case exponential as is visible through the existence of spread of points reaching out horizontally in this log-log space (particularly on the Tapaal plots).

The colors on the decidable instances reflect whether the verdict was true or false.
For false properties a counter-example was found by both procedures interrupting the search, and while the search space of
a reduced model is a priori smaller, heuristics and even luck can play a role in finding a counter-example early.
True answers on the other hand generally require a full exploration of the state space so that the reductions should play a major role in reducing the complexity of model-checking. The existence of True answers where the reduction fails is surprising at first, but a smaller Kripke structure does not necessarily induce a smaller product as happens sometimes in this large benchmark (and in other reduction techniques such as stubborn sets~\cite{Valmari90}). 
On the other hand the points aligned to the right of the plots a) and c) (189 for ITS-tools and 119 for Tapaal) 
correspond to cases where our procedure improved these state of the art tools, allowing one to reach a conclusion when the original method fails.

The plots on the right use orange to denote cases where the verdict on the reduced and original models were the same; on these points
the procedures had comparable behaviors (either exploring a whole state space or exhibiting a counter-example). The blue color denotes points where the two procedures disagree, with several blue points above the diagonal reflecting cases where the reduced procedure explored the whole state space and thought the property was true while the original procedure found a counter-example (this is the worst case).
Surprisingly, even though on these non decidable plots b) and d) our procedure should not be trusted, it mostly agrees (in 95\% of the cases) with the decision reached on the original.

Out of the 9222 experiments in total, for ITS-tools 5901 runs reached a trusted decision (64 \%), 2927 instances reached an untrusted verdict (32 \%), and the reduced procedure timed out in 394 instances (4 \%). Tapaal reached a trusted decision in 5866 instances (64 \%), 2884 instances reached an untrusted verdict (31 \%), and the reduced procedure timed out in 472 instances (5 \%). On this benchmark of formulae we thus reached a trusted decision in almost two thirds of the cases using the reduced procedure.

\section{Extensions and Perspectives}
\label{sec:extension}

In this section we investigate how to improve the results obtained in the previous sections,  
thus extending the presentation of~\cite{forte22}.
The main extension considered is to focus on fragments of a language that satisfy the shortening/lengthening/stutter insensitivity criterion, even when the language as a whole is length sensitive.
We first introduce tools to partition a property language into these fragments and present experimental 
evaluation on a benchmark of LTL formulae. 
We then present possible approaches to confirm non trustworthy answers of the current semi decision procedure,
 one of which turns out to be applicable to a much wider range of properties.
All of this section is new content with respect to~\cite{forte22}.

\subsection{Partition of a Language}

The idea is to partition the language of the (negation of the) property into four parts:
the stutter insensitive part, the pure shortening insensitive part, the pure lengthening insensitive part, and the pure fully length sensitive part. This is a partition of the language, hence the ``pure'' qualifier indicates that there is no overlap, e.g. the pure shortening insensitive part does not contain any word of the stutter insensitive part of the original language.

Given this partition, we can use the most efficient available procedure to check emptiness of the product of the system with each part.
For the first three parts in particular, we can  work with a reduction of the system.
The stutter insensitive part benefits from a full decision procedure when working with the reduction, 
and the partly length insensitive parts can use the semi-decision procedure of Theorem~\ref{th:short}.

\begin{defi}
\label{def:part}
Length-sensitive partition of a language.

Given a language $\lang$, we define:

\begin{itemize}
\item The stutter insensitive part of $\lang$ noted $SI^\pm(\lang)$ is defined as the largest subset of $\lang$ that is stutter insensitive. More precisely $\forall r \in \lang$, 
$$
r \in SI^\pm(\lang) \iff (\forall r' \in \Sigma^\omega, r' \pre r \lor r \pre r' \implies r' \in \lang)
$$
\item The pure shortening insensitive part of $\lang$ noted $SI^-(\lang)$ is the largest subset of $\lang$ that is shortening insensitive but does not intersect with $SI^\pm(\lang)$. More precisely $\forall r \in \lang$

$$
r \in SI^-(\lang) \iff 
\left\{
\begin{array}{l}
\forall r' \in \Sigma^\omega, r' \pre r \implies r' \in \lang \\
\exists r'' \in \Sigma^\omega \setminus \lang, r \pre r''
\end{array}
\right.
$$

\item The pure lengthening insensitive part of $\lang$ noted $SI^+(\lang)$ is the largest subset of $\lang$ that is lengthening insensitive but does not intersect with $SI^\pm(\lang)$. More precisely  $\forall r \in \lang$

$$
r \in SI^+(\lang) \iff 
\left\{
\begin{array}{l}
\forall r' \in \Sigma^\omega, r \pre r' \implies r' \in \lang \\
\exists r'' \in \Sigma^\omega \setminus \lang, r'' \pre r
\end{array}
\right.
$$

\item The pure fully length sensitive part of $\lang$ noted $SS(\lang)$ is the largest subset of $\lang$ that is 
 shortening and lengthening sensitive but does not intersect with the previous parts. More precisely  $\forall r \in \lang$

$$
r \in SS(\lang) \iff 
\left\{
\begin{array}{l}
\exists r' \in \Sigma^\omega \setminus \lang, r \pre r' \\
\exists r'' \in \Sigma^\omega \setminus \lang, r'' \pre r
\end{array}
\right.
$$

\end{itemize}
\end{defi}

To compute this partition when the language $\lang$ is represented by a Buchi automaton $A$, we can reuse the closure ($cl$) and self-loopization ($sl$) syntactic transformations of an automaton introduced in Section~\ref{sec:detection}.

\begin{defi}
Computing the length-sensitive partition of a language.

Let $A$ designate a Büchi automaton recognizing language $\lang$.
\begin{itemize}

\item Let $B=sl(cl(A))$, let $C=B \otimes \bar{A}$, let $D=sl(cl(C))$, then $SI^\pm(A) = A \otimes \overline{D}$

\item Let $A'= A \otimes \overline{SI(A)}$, 
\begin{itemize}
    \item Let $B^-=cl(A')$, let $C^-=B^- \otimes \overline{A'}$, let $D^-=sl(C^-)$, then $SI^-(A) = A' \otimes \overline{D^-}$.
    \item Let $B^+=sl(A')$, let $C^+=B^+ \otimes \overline{A'}$, let $D^+=cl(C^+)$, then $SI^+(A) = A' \otimes \overline{D^+}$.
\end{itemize}

\item Finally, $SS(A) = A \otimes \overline{SI(A)} \otimes \overline{SI^-(A)}  \otimes \overline{SI^+(A)} $

\end{itemize}
\end{defi}

The automaton produced in each case matches Definition~\ref{def:part}, e.g. $\lang(SI^\pm(A)) = SI^\pm(\lang)$ when $A$ recognizes language $\lang$.
The reasoning behind these definitions is illustrated in Figure~\ref{fig:compute}.
For each equivalence class, we simply reason on three distinct but comparable words $r, r'$ and $r''$ such that $r \pre r' \pre r''$.
The question for each of these words is whether they belong to the language of $A$ or not.
This gives us $2^3=8$ cases to consider, but some cases are redundant, e.g. the case where $r$ and $r'$ belong to $A$ but $r''$ doesn't is homogeneous to the case where $r$ belongs to $A$ but neither $r'$ nor $r''$ do.
Figure~\ref{fig:compute} thus only represents the $6$ significantly different cases for a given equivalence class of runs.

\begin{figure}[tbp]
    \centering
  \begin{subfigure}[t]{\linewidth}
  \centering
\includestandalone[width=\linewidth]{fig/SIpart}
    \caption{
    Computing the stutter insensitive part $SI^\pm$ of the language of an automaton $A$. }
\end{subfigure}
  \begin{subfigure}[t]{\linewidth}
  \centering
\includestandalone[width=\linewidth]{fig/ShortPart}
    \caption{
    Computing the pure shortening insensitive part $SI^-$ of the language of $A$.
    }
\end{subfigure}
\caption{Similarly to Fig.~\ref{fig:lang}, $\Sigma^\omega$ is represented as a circle that is partitioned into equivalence classes of word (represented as pie slices), and longer words are assumed to be further from the center of the circle. Gray areas are inside the language of the automaton, white are outside of it. In this example six equivalence classes are chosen to cover all different possible situations within an equivalence class. }
\label{fig:compute}
\end{figure}

It can be noted that it is quite easy to adapt the steps to compute less strict versions of these ``partitions'', e.g. extracting the sublanguage $SI^\pm \cup SI^-$ can be done using equations for computing $SI^-$ (see Fig.~\ref{fig:compute}) but substituting $A$ for $A'$ in all equations. This involves fewer complement operations on automata.

Unfortunately, computing the complement of an automaton $A$ is worst case exponential in the size of $A$, and the procedure described here uses several
complementations to implement set difference: $\lang(A) \setminus \lang(B) = \lang( A \otimes \bar{B})$.
At least when the automaton $A_\varphi$ comes from an LTL property, we can compute $A_{\lnot\varphi}$ instead of computing the complement $\bar{A_\varphi}$.
The other complement operations in our procedure seem unavoidable at this stage however.
Despite this high worst case complexity, the complexity of model-checking as a whole is usually
dominated by the size of the Kripke structure, which can be substantially reduced by structural reductions.
Hence a model-checking procedure using the partition and structural reductions where possible might still be competitive with the default procedure that cannot use structural reductions at all unless the property is fully stutter insensitive.

\subsection{Experiments with Partitioning}

As preliminary experimentation we computed these partitions on the LTL formulae of Section~\ref{sec:perfsLTL}~\footnote{The total number of formulae in the MCC dataset do not match those in Table~\ref{tab:formulae} since this experiment was run with a more recent version of ITS-Tools (release $2022-11$) that can parse more model/formula pairs of this benchmark ($44252$ up from $43989$ out of $45152$ model/formula pairs in the full MCC data set).}. Results are presented in Table~\ref{tab:partition}.

\begin{table}
\begin{footnotesize}
\begin{centering}
\begin{tabular}{c||c|c||c||c|c||c|c||c|c|c|c|c}
Bench. & Total & TO+MO & SI & LI & w/ $SI^\pm$ & ShI & w/ $SI^\pm$ & LS & w/ $SI^\pm$ & w/ $SI^+$ & w/ $SI^-$ & w/ $SS$ \\
\hline
Dwyer & 55 & 0 & 32 & 13 & 13 & 9 & 9 & 1 & 1&1 &1  & 0\\
Spot & 93 & 1+0 & 63 & 17 & 17 & 11 & 11 & 2 & 2 & 2 & 2 & 1 \\
RERS & 2050 & 0 & 714 & 777 & 777 & 559 & 559 & 0 &0 &0 & 0 & 0 \\
MCC & 44252 & 328+40 & 26146 & 4996 & 4994 & 6161 & 6146 & 6581 & 6527 & 6498 & 6573 & 2799 \\
\end{tabular}
\end{centering}
\end{footnotesize}
\caption{Study of partitioning of properties of the literature. The $SI$ stutter insensitive, $LI$ lengthening insensitive, $ShI$ shortening insensitive and $LS$ length sensitive columns are the same as in Table~\ref{tab:formulae}. For $LI$ and $ShI$ we also report the number of cases where the language contains a pure stuttering insensitive part $SI^\pm$ (columns $w/ SI^\pm$), and for the length sensitive properties $LS$ we also report the number of cases in which the $SI^\pm, SI^+, SI^-$ and $SS$ parts are non empty.
TO indicates a timeout of over 15 seconds, and MO indicates a memory overflow when computing the partition.}
\label{tab:partition}
\end{table}

This table shows that:
\begin{itemize}
\item An overwhelming majority (over $99.8 \%$ in all cases, and $100 \%$ of non random formulae)  of shortening or lengthening insensitive properties contain a stutter insensitive sublanguage (column $w/ SI^\pm$). This perhaps explain the high consistency of results on non decidable instances of Fig~\ref{fig:benchmark}. This observation is also true of fully length sensitive properties $LS$, with over $99.1\%$ of these properties containing a stutter insensitive sublanguage. At least for this part of the language, strategies using structural reductions and/or partial order reduction are possible.
\item While computing the partition of a language can theoretically be prohibitively expensive, in practice in most cases it is possible to do so. We fail to compute the partition for only one property of the $Spot$ benchmark, and less than one percent of the MCC full dataset. These measures use a timeout of 15 seconds, which we consider negligible in most cases in the full model-checking procedure that has timeout of the order of several minutes.
\item Most length sensitive languages (classified as ``LS'') are actually a combination of a lengthening insensitive and a shortening insensitive part, typically with some stuttering insensitive behavior added on top (in over $99 \%$ of cases), but do not actually contain an $SS$ pure fully length sensitive part. Only $42 \%$ of the length sensitive properties of the MCC actually contain a fully sensitive $SS$ part, which only represents $\approx 6 \%$ of all model/formula pairs in the MCC (that ITS-Tools could parse). These percentages fall to practically zero on non random formulae.
\end{itemize}

Overall the results in this table are very encouraging. Studying fragments of the language using the most efficient strategy for each part seems to be a promising approach, particularly for the $LS$ properties for which we would currently default to a study on the non reduced version of the model.

\subsection{Extending the language}

This section is independent of the two previous sections, and presents a new complementary approach that can help to validate cases where there are no counter-examples.

\begin{pty}
For any language $\lang$, by extension of the closure operation on automaton, let $cl(\lang)=\{r' \in \Sigma^\omega \mid \exists r \in \lang, r' \pre r \}$. 

Given two languages $\lang_{\lnot \varphi}$ and $\lang_\KS$, and a reduction function $I$ for $\lang_\KS$ such that  $\mathit{Red}_I(\lang_\KS)$ is a reduction of $\lang_\KS$,
$$
cl(\lang_{\lnot \varphi}) \cap \mathit{Red}_I(\lang_\KS) = \emptyset \implies \lang_{\lnot \varphi} \cap \lang_\KS  = \emptyset 
$$
\end{pty}

\begin{proof}
Because $cl(\lang_{\lnot \varphi}) \cap \mathit{Red}_I(\lang_\KS) = \emptyset$, we are guaranteed that, in any stutter equivalent class $\hat{r}$, all words of the reduced language
$\mathit{Red}_I(\lang_\KS)$ in this class (if any exist) are strictly longer than any word in $cl(\lang_{\lnot \varphi})$ hence they are strictly longer than any word in $\lang_{\lnot \varphi}$. Since $\mathit{Red}_I(\lang_\KS)$ only contains words that are shorter or equal to words of the original language, $\lang_\KS$ also must contain only words that are strictly longer than any word in $\lang_{\lnot \varphi}$.
\end{proof}

This observation gives us a new semi-decision procedure (only able to prove absence of counter-examples) that uses a structurally reduced model and can be applied to arbitrary properties. Of course, it is only really relevant for lengthening insensitive or length sensitive properties as stutter insensitive properties enjoy a full decision procedure and shortening insensitive properties are not modified by the $cl$ operation (by definition).
Note that for these shortening insensitive properties, we already had established in Theorem~\ref{th:short} that when the product ($\lang_{\lnot \varphi} \cap \mathit{Red}_I(\lang_\KS)$) is empty, the result is reliable.

\subsection{Revisiting the decision procedure}

A revisited model-checking approach exploiting the new features introduced in this section, and applicable to arbitrary properties could be:

\begin{enumerate}
    \item First, optimistically compute the product of the property and the reduced \KS{}.
    \item If the property language is SI, LI or ShI and the verdict is reliable, conclude.
    \item Otherwise, try to confirm the verdict obtained:
    \begin{itemize}
        \item If the product is empty, try to confirm using the product of $cl(A_{\lnot\varphi})$ and the reduced $\KS{}$. If this product is empty, conclude.
        \item If the product is not empty, try to confirm the existence of a reliable counter-example in the product of $SI^\pm \cup SI^+$ with the reduced $\KS$. If it exists conclude.  
    \end{itemize}
    \item If these tests fail, we can still try to compute the result for the $SI^\pm$ part of the language on the reduced system (since this is a full and reliable decision procedure) and proceed (if the product is empty, otherwise we have a trusted counter-example) to verify the remaining parts of the language on the original $\KS$.
\end{enumerate}

Overall for length sensitive properties, separate analysis of the parts of the language $SI^\pm, SI^+, SI^-$ on the reduced $\KS$ (using the most appropriate decision procedure) remains possible, thus extending the scope of properties where at least part of the analysis can be performed using structural reductions.

While the outlined approach has not yet been implemented, the experiments on partitioning of Table~\ref{tab:partition} indicate that there is a lot of room for these strategies to work (over $99 \%$ of properties in the benchmark contain an $SI^\pm$ part).
Moreover the high correlation ($95 \%$) between untrusted verdicts of the semi-decision procedure (see Fig.~\ref{fig:benchmark} B) and D)) and actual verdicts seems to indicate that an approach seeking to confirm untrusted verdicts could be very successful.

\section{Related Work}
\label{sec:related}

\noindent\textbf{Partial order vs structural reductions.}
Partial order reduction (POR)~\cite{porBook,Valmari90,Peled94,GodefroidW94} is a very popular approach to combat state explosion for 
\emph{stutter insensitive} formulae.
These approaches use diverse strategies (stubborn sets, ample sets, sleep sets\ldots) to consider only a subset of events at each step of the model-checking while still ensuring that at least one representative of each stutter equivalent class of runs is explored.
Because the preservation criterion is based on equivalence classes of runs, this family of approaches is limited only to the stutter insensitive fragment of LTL (see Fig.\ref{fig:lang}). 
However the structural reduction rules used in this paper are compatible and can be stacked with POR when the formula is stutter insensitive; this is the setting in which most structural reduction rules were originally defined.
Hence in the approach where we study either fully $SI$ formulas or simply the $SI^\pm$ fragment of the language of the property separately, POR can be stacked on top of our approach for studying this fragment.

\noindent\textbf{Structural reductions in the literature.}
The structural reductions rules we used in the performance evaluation are defined on Petri nets where 
the literature on the subject is rich~\cite{Berthelot85,PPP00,EHPP05,HPP06,berthomieu19,YTM20}.
However there are other formalism where similar reduction rules have been defined such as~\cite{PR08} using ``atomic'' blocks in Promela,
transaction reductions for the widely encompassing intermediate language PINS of LTSmin~\cite{Laarman18}, and even in the context of multi-threaded programs~\cite{FlanaganQ03}. All these approaches are structural or syntactic, and they are run prior to model-checking per se.

\noindent\textbf{Non structural reductions in the literature.} Other strategies have been proposed that instead of structurally reducing the system, dynamically build an abstraction of the Kripke structure where less observable stuttering occurs. These strategies build a $\KS{}$ whose language \emph{is} a reduction of the language of the original $\KS{}$ (in the sense of Def.~\ref{def:reduction}), that can then be presented to the emptiness check algorithm with the negation of the formula. They are thus also compatible with the approach proposed in this paper.
Such strategies include the Covering Step Graph (CSG) construction of~\cite{Vernadat97} where a ``step'' is performed (instead of firing a single event) that includes several independent transitions. 
The Symbolic Observation Graph of~\cite{KP08} is another example where states of the original $\KS{}$ are computed (using BDDs) and aggregated as long as the atomic proposition values do not evolve; in practice it exhibits to the emptiness check only shortest runs in each equivalence class hence it is a reduction.

\section{Conclusion}

To combat the state space explosion problem that LTL model-checking meets, structural reductions have been proposed that syntactically compact the model so that it exhibits fewer interleavings of non observable actions.  
Prior to this work, all of these approaches were limited to the stutter insensitive fragment of the logic.
We bring a semi-decision procedure that widens the applicability of these strategies to  formula that are \emph{shortening insensitive} or \emph{lengthening insensitive}. 
The experimental evidence presented shows that the fragment of the logic covered by these new categories is quite useful in practice. 
An extensive measure using the models, formulae and the two best tools of the model-checking competition 2021 shows that our strategy can improve the decision power of state of the art tools, and confirm that in the best case an exponential speedup of the decision procedure can be attained. 
We further proposed in this paper approaches that seek to confirm untrusted semi-decision verdicts using a partition of the language, and an approach based on extending the (arbitrary) property language to obtain a procedure that could reliably conclude that the product is empty while working with a reduced system.
We also identified several other strategies that are compatible with our approach since they construct a reduced language. 

Further work involves implementation and experimentation of the approaches introduced in Section~\ref{sec:extension}, as well as developing approaches based on studying the precise (untrusted) counter-example if one is produced to progressively refine the analysis in a CEGAR manner\cite{cegar00}.
\bibliographystyle{alphaurl}
\bibliography{biblio.bib}

\end{document}